\tikzset{mystyle1/.style={shape=circle,fill=black,scale=0.3}}
\tikzset{mystyle2/.style={shape=circle,fill=red,scale=0.45}}
\tikzset{mystyle3/.style={shape=circle,fill=blue,scale=0.37}}
\tikzset{mystyle4/.style={shape=circle,fill=green,scale=0.37}}
\tikzset{withtext/.style={fill=white}}
\DeclarePairedDelimiter\floor{\lfloor}{\rfloor}
\def\BState{\State\hskip-\ALG@thistlm}
\newtheorem{theorem}{Theorem}
\newtheorem{definition}[theorem]{Definition}
\newtheorem{lemma}[theorem]{Lemma}
\newtheorem{corollary}[theorem]{Corollary}
\newtheorem{proposition}[theorem]{Proposition}
\numberwithin{equation}{section} % Number equations within sections (i.e. 1.1, 1.2, 2.1, 2.2 instead of 1, 2, 3, 4)
\numberwithin{figure}{section} % Number figures within sections (i.e. 1.1, 1.2, 2.1, 2.2 instead of 1, 2, 3, 4)
\numberwithin{table}{section} % Number tables within sections (i.e. 1.1, 1.2, 2.1, 2.2 instead of 1, 2, 3, 4)
\newcommand{\horrule}[1]{\rule{\linewidth}{#1}} % Create horizontal rule command with 1 argument of height
\title{	
\normalfont \normalsize 
%\textsc{Berkeley, Department of Mathematics} \\ [25pt] % Your university, school and/or department name(s)
\horrule{0.5pt} \\[0.4cm] % Thin top horizontal rule
\huge Parallelepipeds obtaining HBL lower bounds \\ % The assignment title
\horrule{2pt} \\[0.5cm] % Thick bottom horizontal rule
}
\author{James Demmel\footnote{EECS, Mathematics, University of California, Berkeley, CA (demmel@eecs.berkeley.edu)} \, Alex Rusciano\footnote{Mathematics, University of California, Berkeley, CA (rusciano@math.berkeley.edu)}} % Your name
\begin{document}

\maketitle % Print the title

\begin{abstract}
\noindent This work studies the application of the discrete H\"{o}lder-Brascamp-Lieb (HBL) inequalities to the design of communication optimal algorithms.  In particular, it describes optimal tiling (blocking) strategies for nested loops that lack data dependencies and exhibit linear memory access patterns.  We attain known lower bounds for communication costs by unraveling the relationship between the HBL linear program, its dual, and tile selection.  The methods used are constructive and algorithmic.  The case when all arrays have one index is explored in depth, as a useful example in which a particularly efficient tiling can be determined.
\end{abstract}

\section{Background: H\"{o}lder-Brascamp-Lieb (HBL) Inequalities}
HBL inequalities are very powerful and include many famous inequalities, including H\"{o}lder's inequality and Young's inequality.  Stated for abelian groups $G=\mathbb{Z}^d, G_i=\mathbb{Z}^{d_i}$ with linear maps $\phi_i: G \rightarrow G_i$, they take the general form	

\begin{equation}
 \sum\limits_{x \in G} \prod\limits_{i\in J} f_i(\phi_i(x)) \leq \prod\limits_{j \in J} \|f_i\|_{1/s_i} \label{HBL}
\end{equation}

holding for non-negative integrable $f_i$ on $G_i$. The norms are $L_p$ norms.  The $s_i$ for which this holds depend on the maps $\phi_i$ only, and $G$ for us will be $\mathbb{Z}^d$.  We will call such $s$ feasible for the inequality (\ref{HBL}). It turns out the set of feasible $s$ form a polyhedron we will denote by $\mathcal{P}$.  The case when $G$ or the $G_i$ have a torsion component will not be discussed, although it is of potential interest.

\vspace*{.5cm}
As examples, the commonly stated version of H\"{o}lder's amounts to the inequality holding for $0 \leq s_1, s_2 \leq 1$ with $s_1 + s_2 = 1$ and the maps $\phi$ being the identity maps.  Young's inequality for convolutions uses maps from $\mathbb{Z}^2$ to $\mathbb{Z}$.  Also it should be noted that HBL inequalities were first introduced and proved for real or complex vector spaces, not abelian groups or rational vector spaces.

\vspace*{.3cm}
To connect this to communication avoidance, the group $G$ is the lattice $\mathbb{Z}^{d}$ of computations to perform.  To perform the computation corresponding to lattice point $x$, one needs to hold the data contained in $\phi_i(x)$ for all $i$.  The goal of HBL inequalities, in this context, is to bound how many lattice points $x$ we can compute using a memory size of $M$.  This means the $f_i$ are to be the indicator functions of some sets $S_i$.  Then HBL inequalities bound the size of set $S := \cap \phi_i^{-1}(S_i) \subset G$ that are computable provided we store the memory contained in each $S_i$.  

\vspace*{.5cm}
For any feasible $s$, and memory sizes $M_i = c_i \cdot M$ with $\sum c_i = 1$,

\begin{equation}
 |S| \leq \prod |S_i|^{s_i} = \prod_i M_i^{s_i} = M^{1^T s} \cdot \prod c_i^{s_i} \label{bound}
\end{equation}

\vspace*{.3cm}
Inequality (\ref{bound}) is a guaranteed upper bound given by the HBL inequality. As an approximation, minimizing the bound amounts to finding the smallest $1^T s$ over all feasible $s$ and neglecting the nuances behind the $c_i$.  We call this minimal sum $s_{\text{HBL}}$.  To formally state a rationale for ignoring the $\prod c_i^{s_i}$,

\begin{proposition}
 Assume $M$ total memory to work with as above and there are $n$ maps.  Then for the optimal choice of $c_i$, the HBL bound \ref{bound} is $\Theta(M^{1^T s})$.  Moreover, the generic choice $c_i = 1/n$ attains this bound.
 
\end{proposition}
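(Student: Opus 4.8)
The plan is to isolate the only factor in the bound (\ref{bound}) that actually depends on $M$. Writing the bound as $B(c) = M^{1^T s}\cdot g(c)$ with $g(c) := \prod_i c_i^{s_i}$, the key observation is that $g$ depends on the memory split $c = (c_1,\dots,c_n)$ alone and carries no dependence on $M$. Consequently, as soon as one shows that $g$ takes a finite, strictly positive value at the choice of $c$ under consideration, the corresponding bound is automatically $\Theta(M^{1^T s})$: the implied constants are exactly $g(c)$ and do not vary with $M$. Thus the whole content of the proposition reduces to controlling $g$ on the simplex $\Delta = \{c : c_i \ge 0,\ \sum_i c_i = 1\}$.

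For the optimal split I would maximize $g$ over $\Delta$ (maximizing the allowable tile size minimizes the number of tiles, and hence the communication). Passing to $\log g(c) = \sum_i s_i \log c_i$, which is concave, a Lagrange multiplier computation --- equivalently, weighted AM--GM --- yields the maximizer $c_i^\star = s_i/(1^T s)$ and the optimal value $C_{\max} = \prod_i \bigl(s_i/(1^T s)\bigr)^{s_i}$. Since $\Delta$ is compact and $g$ is continuous (with the convention $0^0 = 1$), the maximum is attained, and $C_{\max}$ is a strictly positive constant independent of $M$; hence the optimal bound equals $C_{\max}\, M^{1^T s} = \Theta(M^{1^T s})$.

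For the generic split $c_i = 1/n$ a direct evaluation gives $g(1/n,\dots,1/n) = (1/n)^{\sum_i s_i} = n^{-1^T s}$, again a strictly positive constant free of $M$. Therefore this choice produces a bound of order $n^{-1^T s}M^{1^T s} = \Theta(M^{1^T s})$, matching the optimal order; the two differ only by the constant factor $n^{-1^T s}/C_{\max} \in (0,1]$, which is precisely the sense in which the uniform choice \emph{attains} the bound and in which $\prod_i c_i^{s_i}$ may be safely ignored.

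The only genuine subtlety I anticipate is bookkeeping at the boundary of $\Delta$: if some $s_i = 0$ the corresponding factor is identically $1$ and that coordinate drops out of both the optimization and the uniform evaluation, so one should either restrict attention to the indices with $s_i > 0$ or invoke the $0^0 = 1$ convention to keep $g$ continuous on all of $\Delta$. Once this is handled, confirming that $C_{\max}$ and $n^{-1^T s}$ are positive and $M$-independent is routine, and both the $\Theta$ statement and the lower bound $\Omega(M^{1^T s})$ (which needs only $g(c) > 0$) follow immediately from the factorization $B(c) = M^{1^T s}\,g(c)$.
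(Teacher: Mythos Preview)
Your proof is correct and follows the same core factorization $B(c) = M^{1^T s}\cdot g(c)$ as the paper, reducing the claim to showing $g$ is bounded away from $0$ and $\infty$ independently of $M$. The one genuine difference is how the upper bound on $g$ is obtained: the paper simply observes that each $c_i \le 1$ on the simplex, so $\prod_i c_i^{s_i} \le 1$ for every admissible $c$, which immediately sandwiches the optimal bound in $[n^{-1^T s}M^{1^T s},\, M^{1^T s}]$. You instead compute the exact maximizer $c_i^\star = s_i/(1^T s)$ via Lagrange multipliers and identify $C_{\max}$ explicitly. Your route is more work than strictly needed for the $\Theta$ statement, but it is not wasted effort: it anticipates precisely the calculation the paper performs a few paragraphs later when defining the constant $\gamma$ for exact optimality. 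The paper's trivial bound $g\le 1$ is quicker for the asymptotic claim; your explicit optimization buys the sharp constant up front.
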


Here and in the remainder of the paper, all big O notation is with respect to the memory parameter $M$, not the dimension of the computation lattice or number of maps.

\begin{proof}

 Recall the bound in inequality (\ref{bound}) is $M^{1^T s} \cdot \prod c_i^{s_i}$.  For an upper bound, take the $c_i \leq 1$; no $c$ can do better than this.  Now consider $c_i = 1/n$; perhaps there are better $c_i$ for the particular $s_i$, but we can at least do this well.  Consequently, the HBL bound for any $s$ is always in the range
 \[
  [\frac{1}{n^{1^T s}}M^{1^T s} , M^{1^T s}]
 \]
 The lower and upper ranges of this interval are $\Theta(M^{1^T s})$.
\end{proof}

\vspace*{.5cm}
Correspondingly, we will distinguish between two senses of optimality for obtaining the lower bounds.  

\begin{definition}
The family of sets $S(M)$, parametrized by integer $M$, form asymptotically optimal tilings if translations of the set $S(M)$ can tile $\mathbb{Z}^d$ and $S(M)$ satisfies 
\begin{equation}
|S(M)| = \Theta( M^{s_{\text{HBL}}}) \label{big}
\end{equation}
as well as
\begin{equation}
 \forall i, \, |\phi_i(S(M))| = O(M) \label{small}
\end{equation}
\end{definition}
For exact optimality, while $1^T s$ determines the asymptotic behavior of inequality (\ref{bound}), the inequality differs by a constant factor for every $c$ one chooses.  Choice of $c$ could be regarded as strategic use of memory.  This leads us to consider the sharpness of the following inequality when defining exact optimality:

\[
|S| \leq M^{s_{\text{HBL}}} \min_{\substack{s \, \in \, \mathcal{P} \\ 1^T s = s_{\text{HBL}}}}  \max\limits_{\sum c_i=1}  \prod\limits_i c_i^{s_i}
\]

Simple calculations give the optimal choice of $c_i$.  It is equivalent to maximize 
 
 \[
\sum\limits_i s_i \log(c_i)
\]
instead. The method of Lagrange multipliers implies the objective gradient and constraint gradients are parallel:
 \[
  (1, \dots, 1) = \lambda (\frac{s_1}{c_1}, \dots, \frac{s_n}{c_n})
 \]
The solution to this is $c_i = \frac{s_i}{1^T s}$.  This leads to a definition of exact optimality:
\begin{definition}
 Define scaling parameter
 \[
  \gamma := \frac{1}{(s_{\text{HBL}})^{s_{\text{HBL}}}} \min_{\substack{s \, \in \, \mathcal{P} \\ 1^T s = s_{\text{HBL}}}} \prod\limits_i s_i^{s_i} 
 \]
The family of sets $S(M)$, parametrized by integer $M$, are exactly optimal tilings if translations of $S(M)$ can tile $\mathbb{Z}^d$ and $S(M)$ satisfies
 \begin{equation}
  |S(M)| = (1-o(1)) \cdot \gamma M^{s_{\text{HBL}}}\label{exact1}
  \end{equation}
  as well as
  \begin{equation}
  \forall i, \, \sum\limits_i |\phi_i(S(M))| \leq M \label{exact2}
 \end{equation}
\end{definition}

The $o(1)$ term is required for any reasonable goal because one cannot allocate room for fractions of entries from arrays. Conceptually, we are requiring the ratio of $|S(M)|$ and the theoretical optimum to tend to one, i.e. the relative difference is going to 0.  

\vspace*{.5cm}
As another comment on the definition, the minimization problem of computing $\gamma$ is not difficult; elementary calculus shows that $\log(\prod s_i^{s_i})$ is convex.  Indeed, this is the negative entropy function.  Minimization of this function can even be done efficiently.

\vspace*{.5cm}
Most of this work focuses on asymptotic optimality, but we will discuss exact optimality in two special cases.
\section{Background: Solving for $s_{HBL}$ through an LP}
For the maps $\phi$, there is a natural necessary condition for the $s$:

\begin{proposition}
If $s_i$ are to satisfy inequality (\ref{HBL}), then it is necessary that for any subgroup $H$ of $G$ we have the following
 \[
  \sum s_i \cdot \text{rank}(\phi_i(H)) \geq \text{rank}(H)
 \]
\end{proposition}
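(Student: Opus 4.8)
The plan is to verify this necessary condition by a scaling (homogeneity) argument: I would feed the inequality (\ref{HBL}) a family of test functions concentrated on the subgroup $H$, and then read off the rank inequality by comparing the growth rates of the two sides as the scale tends to infinity. This is the discrete analogue of the standard way the Brascamp--Lieb necessary conditions are extracted in the continuous setting, with indicator functions of boxes playing the role that Gaussians play there.

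Concretely, fix a subgroup $H \le G$ and set $r = \text{rank}(H)$. Since $H$ is free abelian, I would choose a $\mathbb{Z}$-basis $v_1,\dots,v_r$ and form the box
\[
 B_N = \Big\{ \textstyle\sum_{j=1}^r a_j v_j : 0 \le a_j < N \Big\} \subset H,
\]
so that $|B_N| = N^r$ exactly. For each $i$ I would take $f_i = \mathbf{1}_{\phi_i(B_N)}$, the indicator of the image of $B_N$ under $\phi_i$. Because the $L_{1/s_i}$ norm of an indicator satisfies $\|\mathbf{1}_A\|_{1/s_i} = |A|^{s_i}$, the right-hand side of (\ref{HBL}) becomes $\prod_i |\phi_i(B_N)|^{s_i}$. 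For the left-hand side, every $x \in B_N$ has $\phi_i(x) \in \phi_i(B_N)$ for all $i$, so the product $\prod_i f_i(\phi_i(x))$ equals $1$ on all of $B_N$; hence the left-hand side is at least $|B_N| = N^r$.

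Substituting into (\ref{HBL}) then yields $N^r \le \prod_i |\phi_i(B_N)|^{s_i}$, and the argument reduces to bounding $|\phi_i(B_N)|$. The step I expect to be the main obstacle is the estimate $|\phi_i(B_N)| = O\!\left(N^{\text{rank}(\phi_i(H))}\right)$. Writing $r_i = \text{rank}(\phi_i(H))$ and expanding each $\phi_i(v_j)$ in a $\mathbb{Z}$-basis of the free group $\phi_i(H) \cong \mathbb{Z}^{r_i}$, the coordinates of any point of $\phi_i(B_N)$ are integer combinations $\sum_j a_j m_{jl}$ with $0 \le a_j < N$, hence bounded in absolute value by $C N$ for a constant $C$ independent of $N$; counting lattice points in this range produces at most $O(N^{r_i})$ of them.

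With this estimate in hand the inequality reads $N^r \le \prod_i O\!\left(N^{s_i r_i}\right) = O\!\left(N^{\sum_i s_i r_i}\right)$, and letting $N \to \infty$ forces $r \le \sum_i s_i \cdot r_i$, which is exactly the claimed condition. The one point that must be checked carefully is that every constant hidden in the $O(\cdot)$ depends only on $H$ and the maps $\phi_i$ and not on $N$, so that it is harmless in the limit; this uniformity, together with the clean identity $\|\mathbf{1}_A\|_{1/s_i} = |A|^{s_i}$, is what makes the scaling argument go through. Note the proof uses only that (\ref{HBL}) holds for this particular family of test functions, and makes no assumption on the $s_i$ beyond that.
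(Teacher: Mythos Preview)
Your argument is correct and is essentially the same as the paper's: both test the HBL inequality on indicator functions of a growing cube inside $H$, compare the growth rate $N^{\text{rank}(H)}$ on the left against $\prod_i |\phi_i(B_N)|^{s_i} = O\!\big(N^{\sum_i s_i\, \text{rank}(\phi_i(H))}\big)$ on the right, and let $N\to\infty$. Your write-up is in fact more detailed than the paper's, which presents only a sketch and defers the full argument to \cite{CDKSY15}.
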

 \begin{proof}
 To prove this fact, it suffices to use indicator functions; consequently we show the necessity in the case of Inequality (\ref{bound}).  
 
 \vspace*{.5cm}
 A uniformly growing cube $S(r)$ in the subgroup $H$ parametrized by side length $r$ grows like $r^{\text{rank}(H)}$ in volume because it is a $\text{rank}(H)$ dimensional object. This is the LHS of Inequality (\ref{bound}).
 
 \vspace*{.5cm}
The RHS of Inequality (\ref{bound}) needs to match this growth rate.  The images $\phi_i(S(r))$ grow asymptotically like $r^{\text{rank}(\phi_i(H))}$.   The LHS must be less than the RHS as $r\rightarrow \infty$, implying the result.
 
 \vspace*{.5cm}
 The above is more of a sketch; in any case, the above appears in \cite{CDKSY15} as part of Theorem 1.4.
 \end{proof}

 The surprising part of Theorem 1.4 is that this is actually sufficient.  \cite{CDKSY15}, supplemented by work from \cite{BCCT}, established
 \begin{theorem}\label{fpt}
 Given maps $\phi$, a collection $s_i \geq 0$ satisfies inequality (\ref{HBL}) if and only if they satisfy for all subgroups $H$ of $G$,
 \[
  \sum\limits_i s_i \cdot \text{rank}(\phi_i(H)) \geq \text{rank}(H)
 \]
\end{theorem}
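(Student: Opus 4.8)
The plan is to prove the forward direction (necessity) via the rank inequality, which Proposition 2.1 already establishes, so the real content is the reverse direction: showing the rank conditions are *sufficient* for the HBL inequality to hold. I would treat this as a genuine reduction to known results rather than attempting a self-contained proof, since the sufficiency for the continuous (real vector space) case is exactly the content of \cite{BCCT}, and the passage from the Euclidean setting to torsion-free discrete abelian groups $G = \mathbb{Z}^d$ is the contribution of \cite{CDKSY15}. So the architecture of my proof would be: (i) verify necessity, (ii) transfer the problem to $\mathbb{R}^d$, (iii) invoke the finiteness/sufficiency theorem there, and (iv) descend back to $\mathbb{Z}^d$.

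First I would dispense with necessity, which is immediate from Proposition 2.1: if $s$ satisfies (\ref{HBL}), then testing against indicator functions of cubes inside an arbitrary subgroup $H$ forces $\sum_i s_i\,\mathrm{rank}(\phi_i(H)) \geq \mathrm{rank}(H)$. The heart of the matter is the converse. Here I would first reduce from subgroups of $\mathbb{Z}^d$ to subspaces of $\mathbb{Q}^d$: every subgroup $H \leq \mathbb{Z}^d$ spans a rational subspace $V = \mathbb{Q}\cdot H$ with $\mathrm{rank}(H) = \dim_{\mathbb{Q}} V$, and since $\phi_i$ is linear, $\mathrm{rank}(\phi_i(H)) = \dim_{\mathbb{Q}} \phi_i(V)$. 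Conversely every rational subspace arises this way. Thus the (infinitely many) subgroup conditions collapse to conditions indexed by rational subspaces, matching the form in which the polytope $\mathcal{P}$ is characterized in \cite{BCCT}. Tensoring with $\mathbb{R}$ then identifies these with dimension conditions on $\mathbb{R}$-subspaces of $\mathbb{R}^d$, because rank is preserved under field extension for maps defined over $\mathbb{Q}$.

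Next I would feed these verified dimension inequalities into the Euclidean sufficiency theorem of \cite{BCCT}, which asserts precisely that nonnegativity together with the subspace rank conditions implies the continuous Brascamp–Lieb inequality holds on $\mathbb{R}^d$ with finite constant. The remaining and most delicate step is the \emph{discretization}: deducing the summation inequality (\ref{HBL}) over $\mathbb{Z}^d$ from the integral inequality over $\mathbb{R}^d$. I expect this descent to be the main obstacle, and it is exactly what \cite{CDKSY15} supplies in Theorem 1.4. The difficulty is that a naive approach of approximating sums by integrals does not respect the lattice structure of the maps $\phi_i$, and one must control how $\mathbb{Z}^d$ sits inside $\mathbb{R}^d$ under each $\phi_i$ uniformly. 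Rather than reprove this machinery, I would cite it: the theorem is stated as the combination of \cite{CDKSY15} with \cite{BCCT}, so my proof is legitimately a verification that the hypotheses of those two works are met and an explicit assembly of their conclusions.

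Therefore the proof I would write is short and structural: state that necessity is Proposition 2.1, record the rank-preserving correspondence between subgroups of $\mathbb{Z}^d$, rational subspaces of $\mathbb{Q}^d$, and real subspaces of $\mathbb{R}^d$, invoke \cite{BCCT} for sufficiency in the real setting, and invoke the discrete transfer of \cite{CDKSY15} to conclude. The only genuinely new bookkeeping lies in checking that the finitely-many subspace conditions arising from \cite{BCCT} are equivalent to the subgroup conditions as stated, and that this equivalence is compatible with the reduction used in \cite{CDKSY15}; everything harder than that is imported wholesale.
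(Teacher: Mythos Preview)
Your proposal is correct and matches the paper's treatment: the paper does not prove Theorem~\ref{fpt} independently but states it as a result established by \cite{CDKSY15} supplemented by \cite{BCCT}, with the necessity direction handled by the preceding proposition. Your sketch of how the continuous sufficiency from \cite{BCCT} combines with the discrete transfer of \cite{CDKSY15} is exactly the architecture behind the cited result, so there is nothing to add.
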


Because there are only finitely many possible values of the rank of $H$ and its images, there exists a finite list of subgroups which is sufficient to generate the constraints.  Consequently the problem can be formulated as a linear program (LP), if we can find a sufficient list of subgroups.

\begin{definition}[HBL Primal LP] \label{primal}
 If $\mathbf{E}$ is a finite sufficient list of subgroups needed to give the correct HBL constant, we define the HBL primal LP to be

\[
  \begin{matrix} 
   
\text{minimize} &&& 1^T s \\

  \text{subject to} &&& \forall H \in \mathbf{E}, \, \sum s_i \cdot \text{rank}(\phi_i(H)) \geq \text{rank}(H) \\
  &&& s_i \geq 0
  
  \end{matrix}
\]
\end{definition}

\vspace*{.5cm}
Recent work has started to get a grip on formulating the LP in a computationally feasible manner.  In \cite{CDKSY15} a few things are established.  For one thing, only the lattice of subgroups generated by $\text{ker}(\phi_i)$ under sums and intersections needs to be used to generate inequality constraints in Theorem \ref{fpt}.  However, this lattice is often infinite in higher dimensions.  Also, \cite{CDKSY15} describes a terminating algorithm which discovers all the constraints needed to formulate an equivalent LP.  However, the algorithm's complexity is unknown.  The results of \cite{GGOW} provide a number of novel insights into algorithmic computation of $\mathcal{P}$ arising from the closely related continuous version of the inequalities, including a polynomial time membership and weak separation oracles. Much remains to be understood, and in general formulating and optimizing over $\mathcal{P}$ remains intractable.  Here is a summary of tractable special cases, as far as the authors are aware:

\begin{itemize}
 \item All maps are coordinate projections \cite{CDKSY13}.
 \item Each $\text{ker}(\phi_i)$ is rank 1,2, d-1, or d-2, mixes are allowed \cite{V}.  Stated for continuous version of inequalities.
 \item There are no more than 3 maps; then the kernel subgroup lattice is bound by 28, a classical result \cite{D}.
\end{itemize}

\section{Attaining the lower bound by duality}

For applications, it is just as interesting to attain lower bounds as to show they exist.  In this and the subsequent section, we show how the dual of the HBL Primal LP leads to an asymptotically optimal parallelpiped tiling of the computation lattice.  

\vspace*{.5cm}
Notate by $\mathbf{E} = (E_1, \dots, E_k)$ a finite list of subgroups used to formulate the HBL Primal LP.  We introduce the notation

\[
\text{rank}(\mathbf{E}) := (\text{rank}(E_1), \dots, \text{rank}(E_k))^T, 
\]
and similarly 
\[
\text{rank}(\phi_i(\mathbf{E})) := (\text{rank}(\phi_i(E_1)), \dots, \text{rank}(\phi_i(E_k)))^T
\]

One may write the dual of the HBL Primal LP as

\[
  \begin{matrix} 
   
\text{maximize} &&& y^T \text{rank}(\mathbf{E}) \\

  \text{subject to} &&& \forall \phi_i, \, \,  y^T \text{rank}(\phi_i(\mathbf{E})) \leq 1 \\
  &&& y_i \geq 0
  
  \end{matrix}
\]
  
 It will be useful to be more flexible in how we think of the dual problem.  The dual, as formulated from the primal, comes with a particular subgroup list $\mathbf{E}$ indexing the dual variables.  The methods we use add and remove subgroups from consideration, and we do not wish this to fundamentally change the dual LP.  In the future we use the notation $\mathbf{E}$ for the analogous, but more generic, role in the revised dual:
  
\begin{definition}[Dual LP]\label{dual}
   Recall the HBL setting consists of maps $\phi_i$ from lattice $\mathbb{Z}^d$.  A dual vector $y$ will be considered to be indexed by all subgroups of $\mathbb{Z}^d$, but with finitely many non-zero coordinates.  The non-zero coordinates are defined to be the support of $y$.  If the list of subgroups $\mathbf{E}=(E_1, \dots, E_t)$ is the support of $y$, then we introduce a few notations and definitions.  Two natural notational shorthands  are 
   \[
    y^T \text{rank}(\mathbf{E}) := \sum y_{E_j} \text{rank}(E_j)
   \]
   and 
   \[
   y^T \text{rank} (\phi_i(\mathbf{E})) := \sum y_{E_j} \text{rank}(\phi_i(E_j))
   \]
   Now define the objective value of $y$ to be 
   \begin{equation}
   \text{val}(y) := y^T \text{rank}(\mathbf{E}) \label{objective}
   \end{equation}
   and say $y$ is feasible if it satisfies the conditions
   
   \begin{gather}
    \forall \phi_i,  \, C_i(y) := y^T \text{rank}(\phi_i(\mathbf{E})) \leq 1 \label{feasibility1}\\
    y_i \geq 0 \label{feasibility2}
   \end{gather}

\end{definition}

As a further notational note on this definition, we use a few symbols in place of $\mathbf{E}$ when extra information is present.  Typically we will use $\mathbf{Y}$ when the supporting subgroups are independent, and $\mathbf{U}$ when they are a flag.  These definitions are covered later.
 
  \vspace*{.5cm}
  This is readily interpretable when the supporting subgroups $\mathbf{Y}$ of the dual vector are independent.  Here independent means that $\text{rank}(\oplus_i Y_i ) = \sum_i \text{rank}(Y_i)$.  Before explaining exactly how we interpret the dual, we need the following geometrically intuitive lemma.  It demonstrates that asymptotic optimality eases some difficulties stemming from discreteness.

\begin{lemma}\label{GCL}
  
  Take any independent elements $e_1, \dots, e_h $ contained in rank $h$ subgroup $Y \subset \mathbb{Z}^d$ and linear mapping $L$. Define the set
   
   \begin{equation}
   S := \{z \in \mathbb{Z}^d | z = \sum a_i e_i \, \text{with} \, \, 0 \leq a_i \leq \floor{M^k} - 1, \, a_i \in \mathbb{Z} \} \label{cube}
   \end{equation}
   
   In this equation, $k$ is an arbitrary positive number, and $M$ is an integer conceptually representing memory capacity.  In applications later, $k \in (0, 1]$.
   
   \vspace*{.5cm}
   Then for any linear map $L$, $|S| = \Theta(M^{kh})$ and $|L(S)| = O(M^{kr})$ where $r := \text{rank}(L(Y))$.  In applications later, $L$ is taken to be one of the $\phi_i$.
  \end{lemma}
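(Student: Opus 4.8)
The plan is to handle the two claims separately: independence gives an exact count for $|S|$, while a change of basis in the image lattice gives the bound on $|L(S)|$. Throughout, all asymptotics are as $M \to \infty$ with $k > 0$ fixed, so $M^k \to \infty$.

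First I would establish $|S| = \Theta(M^{kh})$. The parametrization $(a_1, \dots, a_h) \mapsto \sum_i a_i e_i$ is injective: since $e_1, \dots, e_h$ are independent, any relation $\sum_i (a_i - b_i) e_i = 0$ forces $a_i = b_i$ for all $i$. Hence $|S|$ equals the number of admissible integer tuples, namely $\lfloor M^k \rfloor^h$. Since $k$ is fixed and $M \to \infty$ we have $(M^k - 1)^h \leq \lfloor M^k \rfloor^h \leq M^{kh}$, and the lower bound equals $M^{kh}(1 - M^{-k})^h = M^{kh}(1+o(1))$, so both bounds are $\Theta(M^{kh})$. This settles the first claim.

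For the second claim I would pass to a basis of the image. Set $r = \text{rank}(L(Y))$ and fix a $\mathbb{Z}$-basis $f_1, \dots, f_r$ of the free abelian group $L(Y)$. Writing each image as $L(e_i) = \sum_{j=1}^r m_{ij} f_j$ with $m_{ij} \in \mathbb{Z}$, every $z = \sum_i a_i e_i \in S$ maps to
\[
L(z) = \sum_{j=1}^r c_j f_j, \qquad c_j := \sum_{i=1}^h a_i m_{ij}.
\]
Because $0 \leq a_i \leq \lfloor M^k \rfloor - 1$, each coordinate satisfies $|c_j| \leq (\lfloor M^k \rfloor - 1)\sum_i |m_{ij}| = O(M^k)$, with an implied constant depending only on $L$ and the $e_i$, not on $M$. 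Thus each $c_j$ takes at most $O(M^k)$ integer values, and since the $f_j$ form a basis the vector $(c_1, \dots, c_r)$ determines $L(z)$. Counting yields $|L(S)| \leq \prod_{j=1}^r O(M^k) = O(M^{kr})$.

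The step I expect to be the main obstacle — or at least the one carrying the key idea — is this second count. The naive approach of counting coefficient tuples $(a_1, \dots, a_h)$ gives only $O(M^{kh})$, which is too weak whenever $r < h$. The essential observation is that the images $L(e_i)$ satisfy $h - r$ linear relations, so $L(S)$ collapses into the rank-$r$ lattice $L(Y)$; counting the bounded coordinate vectors $(c_1,\dots,c_r)$ there is precisely what produces the sharper exponent $kr$. Everything else — the injectivity and the floor estimate — is routine.
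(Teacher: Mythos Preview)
Your proof is correct and follows essentially the same approach as the paper: both argue that $L(S)$ lies in the rank-$r$ lattice $L(Y)$ within a region whose extent in each direction is $O(M^k)$, and then count lattice points. The paper phrases this geometrically (elements of $S$ have Euclidean norm $O(M^k)$, hence so do their images, so $L(S)$ fits in an $r$-dimensional cube of side $O(M^k)$), while you make the same idea explicit by fixing a $\mathbb{Z}$-basis of $L(Y)$ and bounding each coordinate; the content is the same.
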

  
  \begin{proof}
   The elements in set $S$ are $O(M^k)$ from the origin in Euclidean distance, hiding the dimensional factor $d$ in the big O notation.  By linearity, the elements of $L(S)$ are also $O(M^k)$ from the origin in $\text{im}(L)$. Therefore an $r$ dimensional cube residing within $L(Y)$ with side lengths $O(M^k)$ can contain $L(S)$.  This means that $|L(S)| = O(M^{kr})$.
   
   \vspace*{.5cm}
   Finally, from independence of the $e_i$ it follows that $|S| = {\floor{M^k}}^h$
  \end{proof}

We now define the parallelepiped-like construction that will be used to create asymptotically optimal tilings.  Although not the only possible way to build good tiling shapes, it is flexible and leads to clean descriptions.

\begin{definition}[Product Parallelepiped]\label{PP}
     Suppose we are given a dual vector $y$, whose non-zero values are attached to a set of independent subgroups $Y_1, \dots, Y_t$.  Form $S_{Y_i}$ as in Eq. \ref{cube}, using $k = y_{Y_i}$.  
   
   \vspace*{.5cm}
   Now define a parallelepiped shaped tile through a Minkowski sum of sets 
   
   \begin{equation}
   S := S_{Y_1} + \dots + S_{Y_t} \label{sum}
   \end{equation}
\end{definition}

The independent elements used to construct $S_{Y_i}$ are left unspecified; the choice affects constants, but will not affect asymptotic optimality.

\vspace*{.5cm}
To make things explicit, Algorithm \ref{tiling} below produces the translations needed to tile $\mathbb{Z}^d$ with set $S$.  It uses an important matrix factorization for linear maps between abelian groups (or more generally between modules over a principle ideal domain) known as the Smith Normal Form.  

\vspace*{.5cm}
The Smith Normal Form of a matrix $A$ with integer entries is of the form $A = UDV^{-1}$ where $U,V$ are unimodular and $D$ is diagonal with non-negative integer entries. Its diagonal entries $d_i := D_{ii}$ are uniquely defined by requiring $d_i | d_{i+1}$.  In our use of the factorization, the matrix $A$ is injective and consequently full rank, so that all $d_i$ are non-zero.

\begin{algorithm} [H]
\begin{algorithmic}[1]
\caption{Construct tile $S$ and its translations $T$ that tile $\mathbb{Z}^d$}
\label{tiling}
\State Input: Memory parameter $M$ 
\State Input: For each $i=1, \dots, t$: independent elements $e_{i1}, \dots, e_{i h_i}$ chosen from independent rank $h_i$ subgroups $Y_i$
\State Input: For each $i=1, \dots, t$: memory scaling parameter $k = y_{Y_i}$ for subgroup $Y_i$
\State Output: translations $T$ of the set $S$ that tile $\mathbb{Z}^d$ 
\State $E \leftarrow (e_{11}, e_{12}, \dots, e_{t h_t})$
\State $S \leftarrow \{ E \cdot (a_{11},a_{12},\dots, a_{t h_t})^T | \, a_{ij} \in \{0, \dots, \floor{M^{y_{Y_i}}}-1 \}$
\State $m \leftarrow \sum h_i$
\State $(U, D, V) \leftarrow \text{Smith Normal Form}(E)$
\State $U' \leftarrow \text{last d-m columns of U}$
\State $U'' \leftarrow \text{first m columns of U}$
\State $T_1 \leftarrow \{E \cdot (a_{11}, a_{12}, \dots, a_{t h_t})^T \, | \, a_{ij} \in \floor{M^{y_{Y_i}}} \cdot \mathbb{Z}\}$
\State $T_2 \leftarrow \{ U' \cdot (a_1, \dots, a_{d-m})^T \, | \, a_i \in \mathbb{Z}\}$
\State $T_3 \leftarrow \{ U'' \cdot (b_1, \dots, b_m)^T \, | \, b_i \in \{0, \dots, d_i - 1\} \}$
\State $T \leftarrow \text{ Minkowski sum } T_1  + T_2 + T_3$

\Return $S,T$
\end{algorithmic}
\end{algorithm}

The set $S$ returned by the algorithm exactly follows Def. \ref{PP}.  The translations $T$ come from two sources: $T_1$ accounts for the finite size of $M$ while tiling the subgroup generated by $e_{11}, e_{12}, \dots, e_{t h_t}$ under integer linear combinations.  In the future we will write this subgroup as $\langle e_{11}, e_{12}, \dots e_{t h_t} \rangle$, and similarly for other generated subgroups. The others $T_2$, $T_3$ account for the need to tile each coset in $\mathbb{Z}^d / \langle e_{11}, e_{12}, \dots e_{t h_t} \rangle$.

\begin{proposition}
  Algorithm \ref{tiling} correctly outputs a parallelepiped set $S$ which under translation by $T$ tiles $\mathbb{Z}^d$.  This holds for any input: that is, for any selection of independent subgroups $U_i$, choice of independent elements within these subgroups, memory parameter setting $M$, and memory scalings $M^{y_{Y_i}}$.
\end{proposition}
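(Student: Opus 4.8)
The plan is to split the tiling claim into two independent sub-claims and then recombine them. Set $\Lambda := \langle e_{11}, \dots, e_{t h_t} \rangle$, the rank-$m$ sublattice spanned by the chosen independent elements. I would show (i) that $S$ together with $T_1$ tiles $\Lambda$, and (ii) that $T_2 + T_3$ is a complete and non-redundant set of coset representatives for the quotient $\mathbb{Z}^d / \Lambda$. Since $T = T_1 + T_2 + T_3$, combining (i) and (ii) partitions $\mathbb{Z}^d$ into translates of $S$, which is exactly the assertion.

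For sub-claim (i), write $N_i := \lfloor M^{y_{Y_i}} \rfloor$ for the side length attached to $Y_i$. Then $S$ consists of the combinations $\sum a_{ij} e_{ij}$ with $0 \le a_{ij} < N_i$, while $T_1$ consists of the same expressions with $a_{ij} \in N_i \mathbb{Z}$. Because the $e_{ij}$ are independent, every element of $\Lambda$ has unique coordinates $c_{ij}$ in this basis, and Euclidean division $c_{ij} = q_{ij} N_i + r_{ij}$ with $0 \le r_{ij} < N_i$ produces a unique decomposition into an $S$-part $\sum r_{ij} e_{ij}$ and a $T_1$-part $\sum q_{ij} N_i e_{ij}$. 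This is the routine ``a box tiles its lattice by the scaled sublattice'' argument.

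The heart of the proof is sub-claim (ii), and this is where the Smith Normal Form enters. Using $E = U D V^{-1}$, I would note that $EV = UD$ generates the same lattice $\Lambda$ (as $V$ is unimodular), and that the $j$-th column of $UD$ equals $d_j u_j$, where $u_1, \dots, u_m$ are the columns of $U''$ (the first $m$ columns of $U$); hence $\Lambda = \langle d_1 u_1, \dots, d_m u_m \rangle$. Since $U$ is unimodular, its columns $u_1, \dots, u_d$ form a $\mathbb{Z}$-basis of all of $\mathbb{Z}^d$, so writing $x = \sum_{i=1}^d c_i u_i$ gives $\mathbb{Z}^d / \Lambda \cong \bigoplus_{i=1}^m (\mathbb{Z}/d_i\mathbb{Z}) \oplus \mathbb{Z}^{d-m}$, with the coset of $x$ read off from $(c_1 \bmod d_1, \dots, c_m \bmod d_m, c_{m+1}, \dots, c_d)$. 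A complete irredundant system of representatives is then obtained by taking $0 \le b_i < d_i$ for $i \le m$ together with free integers on the remaining $d-m$ coordinates, which is precisely the Minkowski sum $T_3 + T_2$ (the $U''$-combinations bounded by the $d_i$, plus the full integer span of $U'$).

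Finally I would assemble the pieces: any $x \in \mathbb{Z}^d$ lies in a unique $\Lambda$-coset with unique representative $\tau_{23} \in T_2 + T_3$; then $x - \tau_{23} \in \Lambda$ decomposes uniquely as $s + \tau_1$ with $s \in S$, $\tau_1 \in T_1$, so $x = s + (\tau_1 + \tau_{23})$ with $\tau_1 + \tau_{23} \in T$, and $\tau_{23}$, then $\tau_1$, hence $\tau$, are recovered uniquely from $x$. I expect the main obstacle to be the Smith Normal Form bookkeeping in sub-claim (ii): confirming that the first $m$ diagonal entries $d_i$ are exactly the invariant factors controlling the torsion of $\mathbb{Z}^d/\Lambda$ and that the free rank is $d-m$ (which relies on $E$ being injective of full column rank $m$), so that $T_3$ and $T_2$ cover the quotient with neither overlap nor omission. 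Once this identification is clean, the $\Lambda$-tiling step and the final recombination are essentially formal.
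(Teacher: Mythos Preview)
Your proposal is correct and follows essentially the same approach as the paper: both split the problem into showing that $S+T_1$ tiles the sublattice $\Lambda=\langle e_{11},\dots,e_{th_t}\rangle$ via Euclidean division, and that $T_2+T_3$ is a complete irredundant system of coset representatives for $\mathbb{Z}^d/\Lambda$ via the Smith Normal Form identification $\mathbb{Z}^d/\Lambda \cong \mathbb{Z}^{d-m}\oplus\bigoplus_i \mathbb{Z}/d_i\mathbb{Z}$. The paper carries out the coset-representative step by a slightly more hands-on computation (directly checking that two elements $U'a+U''b$ and $U'a'+U''b'$ in the same coset must coincide, and exhibiting a representative for arbitrary $x$), whereas you invoke the structure of the quotient via the basis $u_1,\dots,u_d$ and the relation $\Lambda=\langle d_1u_1,\dots,d_mu_m\rangle$; these are two phrasings of the same argument.
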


\begin{proof}

Let us begin by discussing the translations in $T_1$.  Consider $x =  \sum a_{ij} e_{ij}$.  The set $S$ only uses scalings of $0$ to $\floor{M^{y_{Y_i}}}-1$ of each $e_{ij}$. Consequently, for $x$ to be contained in a translation of $S$, the shift component in the $e_{ij}$ direction must be in the range 
\[
[a_{ij}-\floor{M^{y_{Y_i}}}+1, a_{ij}]
\] 
The only such member of $T_1$ has $e_{ij}$ component $\floor{a_{ij}/{\floor{M^{y_{Y_i}}}}}\cdot \floor{M^{y_{Y_i}}}$

\vspace*{.5cm}
$T$ needs to also contain exactly one representative of each coset in 
\[
\mathbb{Z} / \langle e_{11}, e_{12}, \dots e_{t h_t} \rangle \simeq \mathbb{Z}^{d-m} \oplus (\bigoplus_{i=d-m}^{d} \mathbb{Z} / D_i \mathbb{Z}) 
\]

$T_2$ accounts for the free component, and $T_3$ for the torsion component.

\vspace*{.5cm}
Indeed, let $U'a + U''b, U'a' + U''b'$ be two distinct elements of $T_2 + T_3$.  Saying they are in the same coset is exactly saying their difference lies in $\text{im}(E)$.  Writing $E = U D (V)^{-1}$ as in Algorithm \ref{tiling} and noting $V$ is unimodular, it is clear that $\text{im}(E) = \text{im}(U D)$.  Conclude that lying in the same coset is equivalent to
\[
 U'a+U''b - U'a'-U''b' \in \text{im}(U D) 
\]
As $U$ is unimodular, This means for some $c \in \mathbb{Z}^d$
\[
 (b,a)^T-(b',a')^T = D c
\]
Because $D$ is $d$-by-$m$, the the last $d-m$ coordinates of $D c$ are 0.  This means $a = a'$.  Also for $b, b'$ to be used in $T_3$, they must satisfy $0 \leq b_i, b_i' < d_i$.  But then
\[ 
 -d_i < b_i - b'_i = d_i c_i < d_i
\]
which is only possible if $b_i = b_i'$.

\vspace*{.5cm}
To conclude that all cosets are represented, we argue that for any $x \in \mathbb{Z}^d$, there are $U'a,U''b$ such that $x - U (b,a)^T \in \text{im}(E)$.  Take  $b_i = (U^{-1}x)_{i} \, \text{ mod } (D_i)$ and $a_i = (U^{-1}x)_{m+i}$.

\end{proof}

\vspace*{.5cm}
This paper includes examples at the end in Appendix \ref{examples}.  These will help demonstrate this approach and future aspects of the paper.

\vspace*{.5cm}
Now that the fundamental tiling object and mechanism have been described, we now begin to analyze the properties of the tile $S$ in relation to the HBL problem.

  \begin{proposition}\label{ISP}
  
   Suppose we are given a dual vector $y$, whose non-zero values are attached to a list of independent subgroups $\mathbf{Y} = (Y_1, \dots, Y_t)$.  Form the product parallelepiped $S$ of Def. \ref{PP}.  
   
   \vspace*{.5cm}
   Then $|S|$ = $\Theta(M^{y^T \text{rank}(\mathbf{Y})})$.  If in addition $y$ is dual feasible, then $|\phi_i(S)| = O(M)$ holds for each $\phi_j$.
   \end{proposition}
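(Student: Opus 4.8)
The plan is to reduce both assertions to Lemma \ref{GCL}, applied to each factor $S_{Y_i}$, together with the submultiplicativity $|A + B| \leq |A| \cdot |B|$ of cardinality under Minkowski sums, and --- for the lower bound on $|S|$ only --- the independence hypothesis on the $Y_i$.

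First I would pin down $|S|$. The upper bound is automatic: iterating $|A+B| \leq |A||B|$ over the sum in Def. \ref{PP} gives $|S| \leq \prod_i |S_{Y_i}|$, and Lemma \ref{GCL} (with $k = y_{Y_i}$ and $L$ the identity) evaluates each factor as $\floor{M^{y_{Y_i}}}^{h_i}$, where $h_i = \text{rank}(Y_i)$; the product is $\Theta(M^{\sum_i y_{Y_i} h_i}) = \Theta(M^{y^T \text{rank}(\mathbf{Y})})$. For the matching lower bound I would invoke independence. Choosing a basis $e_{i1}, \dots, e_{ih_i}$ inside each $Y_i$, the defining condition $\text{rank}(\oplus_i Y_i) = \sum_i \text{rank}(Y_i)$ forces the whole collection $\{e_{ij}\}_{i,j}$ to be jointly independent: if $\sum_{i,j} c_{ij} e_{ij} = 0$, then each inner sum $\sum_j c_{ij} e_{ij} \in Y_i$ vanishes by directness, and then every $c_{ij} = 0$ by independence within $Y_i$. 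Consequently every point of $S$ has a unique expansion $\sum_{i,j} a_{ij} e_{ij}$ with $a_{ij} \in \{0, \dots, \floor{M^{y_{Y_i}}} - 1\}$, so distinct coefficient tuples give distinct points and $|S| = \prod_i |S_{Y_i}|$ exactly. The two bounds then coincide.

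Next I would treat the image under dual feasibility. Linearity gives $\phi_i(S) = \phi_i(S_{Y_1}) + \dots + \phi_i(S_{Y_t})$, whence $|\phi_i(S)| \leq \prod_j |\phi_i(S_{Y_j})|$ by the same submultiplicative bound. Applying Lemma \ref{GCL} with $L = \phi_i$ gives $|\phi_i(S_{Y_j})| = O(M^{y_{Y_j} \text{rank}(\phi_i(Y_j))})$, and multiplying over $j$ the exponents add to $\sum_j y_{Y_j} \text{rank}(\phi_i(Y_j)) = C_i(y)$, the quantity appearing in the feasibility constraint \ref{feasibility1}. Hence $|\phi_i(S)| = O(M^{C_i(y)})$, and since feasibility supplies $C_i(y) \leq 1$ while $M \geq 1$, this is $O(M)$, as required.

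The only step needing genuine care is the lower bound on $|S|$: one must be sure the Minkowski sum of the factor parallelepipeds collapses no points, which is precisely where independence of the $Y_i$ (and the induced joint independence of the chosen generators) enters. The image direction, by contrast, costs nothing beyond the crude submultiplicative estimate, since $\phi_i$ need not preserve independence and we only want an upper bound there.
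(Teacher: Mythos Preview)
Your argument is correct and follows essentially the same route as the paper: independence of the $Y_i$ gives $|S| = \prod_i |S_{Y_i}|$ exactly, Lemma \ref{GCL} evaluates each factor, and for the image bound one uses $|\phi_j(S)| \leq \prod_i |\phi_j(S_{Y_i})|$, Lemma \ref{GCL} again, and feasibility to cap the exponent at $1$. You have simply spelled out in more detail the step the paper compresses into the phrase ``by independence of the subgroups,'' namely the joint independence of the chosen generators $\{e_{ij}\}$.
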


\begin{proof}

\vspace*{.5cm}
 By independence of the subgroups contained in $\mathbf{Y}$, it follows that $|S| = \prod\limits_i |S_{Y_i}|$.   Apply the count estimates of Lemma \ref{GCL} to this:
 \[
  |S| = \prod\limits_i \Theta(M^{\text{rank}(Y_i) \cdot  y_{Y_i}}) = \Theta(M^{y^T \text{rank}(\mathbf{Y})})
 \]
It remains to consider the images of this set under the $\phi_j$ in the case $y$ is feasible.  This requires a bound on $|\phi_i(S)|$  Invoking the count estimates of Lemma \ref{GCL} in the second inequality, and feasibility property (\ref{feasibility1}) of $y$ in the third inequality
 
 \[
 |\phi_j(S)| \leq \prod\limits_i |\phi_j(S_{Y_i})| \leq \prod\limits_i M^{\text{rank}(\phi_j(Y_i))y_{Y_i}} = M^{C_j(y)} \leq M
 \]

\end{proof}

It will be necessary to strengthen the bounds on the $|\phi_i(S)|$ later, but this already proves a useful result:

\begin{corollary}
 Suppose there exists a dual optimal solution $y$ with non-zero dual variables attached to a set of independent subgroups $\mathbf{Y}$.  Then we may tile the lattice $\mathbb{Z}^d$ with an asymptotically optimal parallelpiped shape.
\end{corollary}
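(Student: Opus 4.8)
The plan is to read off the corollary as an immediate consequence of strong LP duality together with Proposition \ref{ISP} and the correctness of Algorithm \ref{tiling}; nearly all of the real content has already been established, so the task is to assemble three facts and to check that the dual objective value is exactly the exponent $s_{\text{HBL}}$.

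First I would pin down the objective value. The primal LP of Definition \ref{primal} is assembled from a finite sufficient list of subgroups, so by the characterization of feasibility in Theorem \ref{fpt} its optimum over $1^T s$ is exactly $s_{\text{HBL}}$. Strong duality for finite linear programs then forces the dual optimum to agree, so the hypothesized dual optimal $y$ satisfies $\text{val}(y) = y^T \text{rank}(\mathbf{Y}) = s_{\text{HBL}}$, where $\mathbf{Y}$ denotes its (independent) support.

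Next I would feed this $y$ into the product parallelepiped construction of Definition \ref{PP}, forming $S$ with scalings $k = y_{Y_i}$. Proposition \ref{ISP} then supplies both optimality conditions at once. The size count gives $|S| = \Theta(M^{y^T \text{rank}(\mathbf{Y})}) = \Theta(M^{s_{\text{HBL}}})$, which is precisely condition (\ref{big}). Because a dual optimal point is in particular dual feasible --- it satisfies (\ref{feasibility1}) and (\ref{feasibility2}) --- the image estimate of Proposition \ref{ISP} applies and yields $|\phi_i(S)| = O(M)$ for every map, which is condition (\ref{small}). The remaining requirement in the definition of asymptotic optimality, namely that $S$ tile $\mathbb{Z}^d$ under translation, is handed to us directly by the correctness proposition for Algorithm \ref{tiling}, which produces a valid translation set $T$ for any independent subgroups and any scalings, hence for our $\mathbf{Y}$ and $k = y_{Y_i}$.

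I do not anticipate a genuine obstacle here, since Lemma \ref{GCL}, Proposition \ref{ISP}, and the tiling argument already carry the geometric and combinatorial weight. The only steps requiring a moment of care are the identification $\text{val}(y) = s_{\text{HBL}}$ via strong duality, and the trivial but essential observation that optimality entails feasibility, without which the image bound of Proposition \ref{ISP} would not be available.
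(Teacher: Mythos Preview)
Your argument is correct and matches the paper's own proof essentially line for line: invoke strong duality to get $\text{val}(y)=s_{\text{HBL}}$, apply Proposition \ref{ISP} to obtain (\ref{big}) and (\ref{small}), and cite Algorithm \ref{tiling} for the tiling. The only difference is that you spell out the ``optimality implies feasibility'' observation explicitly, which the paper leaves implicit.
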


\begin{proof}
 From strong duality, $y^T \text{rank}(\mathbf{Y}) = s_{HBL}$.  Then Proposition \ref{ISP} provides a parallelepiped shape that is asymptotically optimal, as it satisfies equations \ref{big} and \ref{small}.  Algorithm \ref{tiling} shows how to tile using object $S$.
\end{proof}

This construction breaks down when dual variables correspond to non-independent subspaces.  The next section generalizes the results of this section through a notion of flags of the subgroup lattice.

\section{Attainability in General}

In this section, we describe an algorithm for producing an asymptotically optimal tiling.  The recipe is to formulate the primal, solve the corresponding dual, and iteratively modify the solution of the dual to something geometrically interpretable.  Consequently, at least asymptotically, the HBL lower bounds are attainable by a polyhedral tiling, and to do so is essentially no harder than describing the set of feasible $s$ for inequality (\ref{HBL}).  This set is commonly referred to as the Brascamp-Lieb Polyhedron.

\subsection{Flags}

It might not be possible to find a dual vector supported on independent subgroups that obtains the optimal value.  However, it turns out that it is possible to find one supported on what we here define to be a flag.

\begin{definition}
 A flag of the lattice $\mathbb{Z}^d$ (for us) is a sequence $\mathbf{U}$ of strictly nested subgroups
 \[
 \emptyset \subset U_1 \subset \dots\subset U_t = \mathbb{Z}^d
 \]
 
\end{definition}

 We want to take the dual solution, and transform it to being supported on a flag.  The following is a simple but important property in accomplishing this goal.  It was also helpful in =\cite{CDKSY15} and \cite{V}, the latter of whom we note found flags useful in studying the vertices of the Brascamp-Lieb polyhedron.

\begin{lemma}[Substitution Lemma]\label{SL}
 For any linear map $L$ on $\mathbb{Z}^d$ and subgroups $V,W$, 
 \[
\text{rank}(L(V)) \geq \text{rank}(L(V \cap W)) + \text{rank}(L(V + W)) - \text{rank}(L(W))
\] 
On the other hand, 
\[
\text{rank}(V) = \text{rank}(V \cap W) + \text{rank}(V + W) - \text{rank}(W) 
\]

\end{lemma}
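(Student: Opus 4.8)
The plan is to reduce both claims to elementary linear algebra over $\mathbb{Q}$. Every subgroup $V \subseteq \mathbb{Z}^d$ has $\text{rank}(V)$ equal to the dimension of the $\mathbb{Q}$-subspace of $\mathbb{Q}^d$ that it spans, and $L$ extends to a $\mathbb{Q}$-linear map whose behavior on these spans matches its behavior on the subgroups; in particular the span of $L(V)$ is the image under $L$ of the span of $V$. Hence each $\text{rank}$ appearing in the lemma may be replaced by the dimension of the corresponding rational subspace, and both assertions may be proved for subspaces of a finite-dimensional $\mathbb{Q}$-vector space. Under this reduction the second (equality) statement is exactly the standard modular dimension formula $\dim(V+W) + \dim(V \cap W) = \dim(V) + \dim(W)$ rearranged; I would either cite it as classical or verify it in one line by extending a basis of $V \cap W$ to bases of $V$ and of $W$.

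For the inequality I would first record the two facts relating $L$ to the lattice operations $+$ and $\cap$. Linearity gives the exact identity $L(V+W) = L(V) + L(W)$, whereas for intersections one has only the inclusion $L(V \cap W) \subseteq L(V) \cap L(W)$. Applying the dimension formula to the pair of subspaces $L(V)$ and $L(W)$, and substituting $L(V) + L(W) = L(V+W)$, yields
\[
\text{rank}(L(V)) = \text{rank}(L(V+W)) + \text{rank}(L(V) \cap L(W)) - \text{rank}(L(W)).
\]
Replacing $\text{rank}(L(V) \cap L(W))$ by the no-larger quantity $\text{rank}(L(V \cap W))$, which is legitimate precisely because of the inclusion just noted, produces the claimed inequality.

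The single place where the argument yields an inequality rather than an equality is the step $\text{rank}(L(V \cap W)) \leq \text{rank}(L(V) \cap L(W))$, and this is the point I expect to be the conceptual crux. The inclusion $L(V \cap W) \subseteq L(V) \cap L(W)$ can be strict: $L$ may send a vector of $V$ and a distinct vector of $W$ to the same image even when neither lies in $V \cap W$, so a common image can appear that is the image of no common preimage. This is exactly the phenomenon responsible for the loss of equality, and it contrasts with the exactness of $L(V+W) = L(V)+L(W)$. Fortunately the proof needs nothing beyond the inclusion itself, so once this containment is observed the estimate requires no further work, and combining it with the two displayed dimension identities completes the argument.
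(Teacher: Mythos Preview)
Your proposal is correct and follows essentially the same route as the paper: prove the modular identity $\text{rank}(V)+\text{rank}(W)=\text{rank}(V+W)+\text{rank}(V\cap W)$ by a basis-extension argument, then apply it to the pair $L(V),L(W)$ and use $L(V+W)=L(V)+L(W)$ together with $L(V\cap W)\subseteq L(V)\cap L(W)$ to obtain the inequality. Your explicit passage to $\mathbb{Q}$-spans is a slight elaboration the paper leaves implicit, but the argument is otherwise the same.
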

\begin{proof}
The claimed equality in the lemma follows by writing a basis for $V \cap W$ and completing it to a basis for $W$ with a second set of independent basis elements.  Call the subgroup spanned by the second set $P$.  Observe $P$ has trivial intersection with $V$, and the rank of $P$ is $\text{rank}(W)-\text{rank}(V \cap W)$.  Applying these observations to $W+V = P+V$,
\[
 \text{rank}(W+V)=\text{rank}(P+V) = \text{rank}(P)+\text{rank}(V) = \text{rank}(W)-\text{rank}(W \cap V) + \text{rank}(V)
\]

establishing the result.  To prove the inequality, apply the equality to subspaces $L(V)$, $L(W)$, and then observe
\[
L(V \cap W) \subseteq L(V) \cap L(W)  , \,  \text{while} \,  L(V + W) = L(V)+L(W)
\]
The reason for the possible inequality is that maybe there are different elements $v \in V$ and $w \in W$, but $L(v) = L(W)$.
\end{proof}

We employ this observation repeatedly to shift the support of a dual vector onto a flag, through the following procedure.  It takes as input a feasible $y$ supported on an arbitrary list $\mathbf{E}$ and outputs a feasible $y'$ supported on a flag $\mathbf{U}$ with the same objective value Eq. \ref{objective}.  Recall by feasible we mean Eq. \ref{feasibility1}, \ref{feasibility2} are satisfied.

\begin{algorithm} [H]
\begin{algorithmic}[1]
\caption{Find dual feasible vector supported on a flag}
\label{MD}
\State Input: dual feasible vector $y$ supported on $E_1, \dots, E_m$
\State Output: feasible $y'$ supported on a flag $U_1, \dots U_t$, with the same objective value as $y$
\State Initialize $y'$ as $y$
\While{$y'$ is not supported on a flag}
\State $V,W \leftarrow $ any pair in the support of $y'$ NOT satisfying $V \subset W$ or $W \subset V$
\State Let $V$ be the member of the pair with $y_V' \leq y_W'$
\State $y_W' \leftarrow y_{W}' - y_V'$
\State $y_{V + W}' \leftarrow  y_{V + W}' + y_V'$
\State $y_{V \cap W}' \leftarrow y_{V \cap W}' + y_V'$ (if $V \cap W \neq \{0\}$)
\State $y_V' \leftarrow 0$
\EndWhile
\Return $y'$ and its support
\end{algorithmic}
\end{algorithm}

\begin{theorem}[Non-negative Flag Theorem]\label{NFT}
 Algorithm \ref{MD} is correct: given input a dual feasible vector $y$ supported on $E_1, \dots, E_m$, it outputs a dual feasible $y'$ supported on a flag $\mathbf{U} = (U_1, \dots U_t)$ with the same objective value as $y$.
\end{theorem}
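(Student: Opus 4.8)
The plan is to establish four things about Algorithm \ref{MD}: each iteration preserves non-negativity, preserves every feasibility constraint, and preserves the objective value $\text{val}$; when the loop exits its support is a flag; and the loop actually terminates. The first three are the routine direction and drop out of Lemma \ref{SL}; termination is where the real work lies.

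For the objective, a single iteration alters $\text{val}(y')$ only through the four subgroups $V,W,V+W,V\cap W$, and the net change is
\[
y_V\bigl(\text{rank}(V\cap W)+\text{rank}(V+W)-\text{rank}(V)-\text{rank}(W)\bigr),
\]
which vanishes by the equality in Lemma \ref{SL}. The same bookkeeping shows that for each map $\phi_i$ the quantity $C_i(y')$ changes by
\[
y_V\bigl(\text{rank}(\phi_i(V\cap W))+\text{rank}(\phi_i(V+W))-\text{rank}(\phi_i(V))-\text{rank}(\phi_i(W))\bigr),
\]
which is $\le 0$ by the inequality in Lemma \ref{SL} together with $y_V\ge 0$; hence no constraint is ever violated. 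Non-negativity survives because $V$ is chosen as the member of the pair with the smaller weight, so $y'_W-y'_V\ge 0$, while every other updated coordinate only increases. When the loop exits, no incomparable pair remains, so the supporting subgroups are totally ordered by inclusion; after padding with $\mathbb{Z}^d$ at zero weight this chain is the desired flag $\mathbf{U}$.

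The crux is termination, and the natural monovariant is the weighted sum of squared ranks $\Phi(y'):=\sum_E y'_E\,\text{rank}(E)^2$. Writing $a=\text{rank}(V\cap W)$, $b=\text{rank}(V+W)$ and using $a+b=\text{rank}(V)+\text{rank}(W)$ with $a\le\min$ and $b\ge\max$, convexity of $t\mapsto t^2$ gives $\Delta\Phi\ge 0$, so $\Phi$ is non-decreasing; since every supporting subgroup has rank $\ge 1$, the preserved objective bounds $\sum_E y'_E$ and hence $\Phi$ from above. Because all instance data (the ranks) are integers the relevant $y$ is rational, say with common denominator $q$, and the updates keep every weight in $\tfrac1q\mathbb{Z}\cap[0,\text{val}(y)]$, a finite set; thus $\Phi$ takes finitely many values and can strictly increase only finitely often. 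This already handles every iteration in which the ranks genuinely spread, namely $a<\min(\text{rank}(V),\text{rank}(W))$.

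The main obstacle is the degenerate iterations, where $V,W$ are incomparable as subgroups yet $a=\min$ and $b=\max$, so $\Phi$ does not move; these are exactly the steps in which $V\cap W$ has finite index in the smaller of $V,W$ (and $V+W$ is finite over the larger), so the spans are unchanged and only the underlying lattices refine. To break the tie I would argue lexicographically with a covolume-based potential. Using the index identity $[V:V\cap W]=[V+W:W]$ one checks that the log-covolume sum $\sum_E y'_E\log(\mathrm{covol}(E))$ is conserved on degenerate steps, that the rank-weighted variant $\sum_E y'_E\,\text{rank}(E)\log(\mathrm{covol}(E))$ strictly decreases whenever the two spans properly differ and is constant when they coincide, and that in the equal-span case a strictly convex function of $\log(\mathrm{covol})$ strictly increases. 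The conserved log-covolume sum together with the weight lower bound $1/q$ bounds every covolume on a degenerate run, confining the subgroups to a finite set; the lexicographic potential then forbids cycling, so each run is finite. Interleaving the finitely many rank-spreading steps with these finite degenerate runs yields termination, and the verification above shows the returned $y'$ is feasible, has the same objective, and is supported on a flag.
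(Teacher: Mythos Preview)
Your invariance arguments (objective preserved, $C_i$ non-increasing, non-negativity, chain on exit) are the same as the paper's, all driven by Lemma~\ref{SL}. The divergence is in the termination argument. The paper's monovariant is the vector $w(y')\in\mathbb{R}^d$ with $w(y')_i=\sum_{\text{rank}(E)=i}y'_E$ under reverse-lexicographic order; it asserts this strictly increases each iteration because $\text{rank}(V+W)>\max(\text{rank}(V),\text{rank}(W))$ for incomparable $V,W$. Your $\Phi=\sum_E y'_E\,\text{rank}(E)^2=\sum_i i^2\,w(y')_i$ is a scalarization of the same idea. The paper's strict-increase claim, however, is only literally correct for $\mathbb{Q}$-subspaces (equivalently, saturated subgroups): for general subgroups of $\mathbb{Z}^d$ one can have incomparable $V,W$ with $\text{rank}(V+W)=\max(\text{rank}(V),\text{rank}(W))$, e.g.\ $V=\langle 2e_1\rangle$, $W=\langle 3e_1\rangle$. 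These are exactly your ``degenerate'' steps, and your covolume-based secondary potential (conservation of $\sum y'_E\log\text{covol}(E)$ via $[V:V\cap W]=[V+W:W]$, then the rank-weighted and convex tiebreakers) handles them correctly. So your proof is in fact more careful than the paper's on this point: in the paper's intended reading---dual variables indexed by $\mathbb{Q}$-subspaces, which is all the rank-based LP ever sees---degenerate steps cannot occur and the single monovariant suffices, while if one takes ``subgroups of $\mathbb{Z}^d$'' literally as Definition~\ref{dual} states, your argument fills a genuine gap. Both proofs tacitly assume the input $y$ has rational entries in order to confine weights to a discrete set; this is automatic for an optimal vertex of the rational LP but is not justified for an arbitrary feasible $y$ as the theorem is phrased.
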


\begin{proof}
 The existence of the pair $V,W$ is equivalent to the support of $y'$ not being totally ordered, which is equivalent to the support of $y'$ not being a flag.  So if the algorithm does terminate, the support will be a flag.  We must show the algorithm terminates, and that $y'$ maintains the objective value and feasibility.
 
 \vspace*{.5cm}
 Induction establishes that $y'$ is always non-negative.  Indeed, inside the while loop, the only danger is $y_W' - y_V'$.  But $y_V'$ is the smaller of the two by construction.  So $y'$ always satisfies Eq. \ref{feasibility2}.
 
 \vspace*{.5cm}
 Let $y''$ denote the value of $y'$ after another pass through the while loop.  We examine the effect of the iteration on Eq. \ref{feasibility1}, \ref{objective}.  In the case $V \cap W \neq \{0\}$,
 \[
  C_i(y'') = C_i(y') - y_{V}' \left[\text{rank}(\phi_i(W)) - \text{rank}(\phi_i(V \cap W))-\text{rank}(\phi_i(V + W)) +\text{rank}(\phi_i(V))\right]
 \]
 The bracketed quantity is non-negative by Lemma \ref{SL}, meaning Eq. \ref{feasibility1} still holds.  If $V \cap W = \{0\}$, then 
 \begin{align*}
C_i(y'') &= C_i(y') - y_{V}' \left[\text{rank}(\phi_i(W)) -\text{rank}(\phi_i(V + W)) +\text{rank}(\phi_i(V))\right] \\
 &= C_i(y') - y_{V}' \left[\text{rank}(\phi_i(W)) - \text{rank}(\phi_i(V \cap W))-\text{rank}(\phi_i(V + W)) +\text{rank}(\phi_i(V))\right]
\end{align*} 
 where we used $\text{rank}(\phi_i(V \cap W)) = 0$.  Consequently Lemma \ref{SL} applies again.  Similarly, the objective value is preserved: in the case of $W \cap V \neq \{0\}$,
 \[
 \text{val}(y'') = \text{val}(y') -  y_{V}' \left[\text{rank}(W) - \text{rank}(V \cap W)-\text{rank}(V+ W) +\text{rank}(V)\right]
 \]
with the bracketed quantity being $0$ by Lemma \ref{SL}.  As before, the same follows in the case $V \cap W = \{0\}$ by noting $\text{rank}(V\cap W) = 0$.
 
 \vspace*{.5cm}
 It remains to establish that the algorithm will terminate.  At first glance, it appears that the $y'$ might cycle in the algorithm.  However, each iteration is increasing the dual variables on $V+W$ and $V \cap W$, so the dual vector seems to be shifting towards the high and low rank subgroups.
 
 \vspace*{.5cm}
 To capture this intuition, we define a simple measure of extremeness on dual vectors.  Recall all groups reside in $\mathbb{Z}^d$.  To a dual vector $y$ we assign a list $w(y)$ of length $d$.  To do this, set
 
 \[
  w(y)_i = \sum\limits_{U \in \text{ support}(y), \text{ rank}(U)=i} y_{U}
 \]
 
 For example, if $y$ is supported on $\langle e_1, e_2 \rangle , \langle e_1 \rangle, \langle e_2 \rangle $ with values $1, .5, 2$, and $d= 3$, then $w(y) = (2.5, 1, 0)$.
We say $y'$ is \textit{more extreme} than $y''$ if $w(y')$ is reverse lexicographically more than $w(y'')$.  Every iteration of the while loop makes $y'$ more extreme; indeed, the value $y_{V+W}$ increases and $V+W$ is of strictly larger rank than $V$ or $W$.
 
 \vspace*{.5cm}
 Now we show that $w(y')$ can take on only finitely many values, completing the proof.  Observe that $y'^T 1$ stays the same or decreases each iteration, so coordinates of $w(y)$ are bound by $y'^T 1$.  Also, the values produced by the algorithm come from performing only addition and subtraction operations on the the coordinates of $y$, which are rational.  Consequently coordinates of $w(y')$ lie in the finite set
 \[
\text{span}_{\mathbb{Z}}(y_{E_1}, \dots, y_{E_m}) \cap [0, y^T 1]
\]
 
\end{proof}

\subsection{Parallelepiped Tilings from Flags}

The main theorem of the previous section allows us to transform an optimal dual vector into another optimal dual vector supported on a flag.  Now we convert the flag subgroups into independent subgroups in the natural manner in order to produce a tiling shape.

\begin{definition}[Flag Parallelepiped]\label{FP}
 Suppose $y$ is supported on flag $\mathbf{U}$.  Let $\mathbf{Y}$ be a sequence of independent subgroups such that $Y_1+\dots+Y_i = U_i$. Define the dual vector $y'$ supported on $\mathbf{Y}$ by 
 \[
 y_{Y_i}' =  y_{U_i} + \dots + y_{U_t} 
 \]
Form a product parallelepiped $S$ of Def. \ref{PP} from $y'$.  We will call $S$ the flag parallelepiped of $y$, and $y'$ its associated dual vector.
\end{definition}

Here let's briefly summarize the progress so far, and what we still need to accomplish.  Provided we formulated the HBL Primal LP and solved its dual, we found a feasible $y$ with objective value $s_{HBL}$.  From Theorem \ref{NFT}, this $y$ can be modified to another $y'$ supported on some flag, maintaining the objective value $s_{HBL}$ and feasibility as defined in Eq. \ref{objective}, \ref{feasibility1}, \ref{feasibility2}.  Next apply the flag parallelepiped construction of Def. \ref{FP} to $y'$ to create a tile $S$ and its associated $y''$.  Proposition \ref{ISP} implies that $S$ includes $\Theta(M^{s_{HBL}})$ lattice points.  However, $y''$ might no longer satisfy Eq. \ref{feasibility1}, so Proposition \ref{ISP} does not show that $|\phi_i(S)| = O(M)$.  We need to expand the analysis of Lemma \ref{GCL} to the case of parallelepipeds instead of cubes.

\begin{lemma}[Growing Parallelepiped Lemma]\label{GPL}
 Consider independent subspaces $Y_1, \dots, Y_t$ with corresponding dual values $y_{Y_i}$.  Construct the product parallelepiped as in Def. \ref{PP} from these independent spaces and dual values.  Assume the subgroups are ordered so that $y_{Y_i}$ monotonically decreases with $i$.  In keeping with Def. \ref{FP} have $U_i := Y_1 + \dots + Y_i$ for $i=1, \dots , t$, and for convenience $U_0 := \{0\}$.  For any linear map $L$, set 
 \[
d_i := \text{rank}(L(U_i))-\text{rank}(L(U_{i-1}))
\]
 
Then we have the bound
 
\[
|L(S)| = O \left( \prod M^{y_{Y_i} \cdot d_i}\right)
\]
In particular, this holds for $L$ chosen to be any of the $\phi_j$.
\end{lemma}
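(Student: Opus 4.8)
The plan is to reduce the statement to a lattice-point count inside a box adapted to the flag $L(U_0) \subset L(U_1) \subset \dots \subset L(U_t)$, and then to exploit the monotonicity of the $y_{Y_i}$ so that the genuinely new directions contributed at level $i$ are all controlled by the largest scale $M^{y_{Y_i}}$ available at that level. Since $L$ is linear and $S = S_{Y_1} + \dots + S_{Y_t}$ is a Minkowski sum, we have $L(S) = L(S_{Y_1}) + \dots + L(S_{Y_t})$. By the argument already used in Lemma \ref{GCL}, every element of $S_{Y_j}$ lies within Euclidean distance $O(M^{y_{Y_j}})$ of the origin, hence so does every element of $L(S_{Y_j})$, with the implied constant depending only on $L$ and the ambient dimension.

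First I would fix a basis $b_1, \dots, b_R$ of the rational span of $L(U_t)$, where $R = \text{rank}(L(U_t)) = \sum_i d_i$, chosen so that $b_1, \dots, b_{\text{rank}(L(U_i))}$ span $L(U_i)$ for every $i$; such a flag-adapted basis exists because the $L(U_i)$ are nested. I would group the coordinate indices into blocks, block $i$ being the $d_i$ indices from $\text{rank}(L(U_{i-1}))+1$ to $\text{rank}(L(U_i))$. The key structural observation is that $L(S_{Y_j}) \subseteq L(U_j)$ has zero coordinates in every block $i$ with $i > j$; equivalently, the block-$i$ coordinates of a point $p = \sum_j q_j \in L(S)$ (with $q_j \in L(S_{Y_j})$) receive contributions only from the terms $q_j$ with $j \ge i$.

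Now I would invoke monotonicity. For $j \ge i$ we have $y_{Y_j} \le y_{Y_i}$, so each $q_j$ with $j \ge i$ has block-$i$ coordinates of size $O(M^{y_{Y_j}}) = O(M^{y_{Y_i}})$; summing the at most $t$ such contributions keeps every block-$i$ coordinate of $p$ within an interval of length $O(M^{y_{Y_i}})$. Thus $L(S)$ is contained in a box whose block-$i$ side lengths are all $O(M^{y_{Y_i}})$, and the telescoping $\sum_i d_i = \text{rank}(L(U_t))$ confirms that these blocks exhaust the span.

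The final step is to count the lattice points of $\text{im}(L)$ inside this box. The main technical obstacle is that the flag-adapted basis need not be a $\mathbb{Z}$-basis of $\text{im}(L)$, so coordinates of lattice points are only rational; however, by Cramer's rule their denominators divide a fixed integer $\kappa$ independent of $M$. Hence the number of lattice points in the box is bounded by the number of admissible coordinate tuples in $(\frac{1}{\kappa}\mathbb{Z})^R$, which factorizes as $\prod_i O(M^{y_{Y_i}})^{d_i} = O(\prod_i M^{y_{Y_i} d_i})$, giving the claim (the case $y_{Y_i} = 0$ being absorbed, since a bounded interval still contributes $O(1) = O(M^0)$ per coordinate). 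I expect the only real care to be needed in this counting step, making the block-by-block product rigorous purely as an upper bound, while the conceptual heart of the lemma is the pairing of the flag filtration with the monotone ordering of the $y_{Y_i}$.
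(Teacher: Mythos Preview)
Your argument is correct and follows essentially the same route as the paper: both decompose the image space along the flag $L(U_0)\subset\cdots\subset L(U_t)$, observe that the block (resp.\ orthogonal component) at level $i$ receives contributions only from $L(S_{Y_j})$ with $j\ge i$, and then invoke monotonicity of the $y_{Y_j}$ to bound each such contribution by $O(M^{y_{Y_i}})$. The only cosmetic difference is that the paper uses Gram--Schmidt and orthogonal projections $P_{E_i}$ in place of your flag-adapted coordinate basis, which makes the norm-to-coordinate passage automatic; your Cramer's-rule remark on bounded denominators is, if anything, a bit more explicit than the paper about the final lattice-point count.
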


Before beginning the proof, we remark on the significance.  The weaker bound used in Proposition \ref{ISP} was
\[
 |L(S)| \leq \prod |L(S_{Y_i})| = O \left(\prod M^{y_{Y_i}\cdot a_i} \right)
\]
with $a_i := \text{rank}(L(Y_i))$.  From independence of the subgroups $Y_j$, it is immediate that $d_i \leq a_i$.  For example, when $L$ is the identity, $d_i = a_i$.  However, when $L(Y_i)$ is not independent of $L(U_{i-1})$, it is always the case that $d_i < a_i$.

\begin{proof}

\vspace*{.5cm}
The goal is to propose a rectangular prism $T$ containing $L(S)$.  Of the defining edges, $d_i$ of them will be length $O(M^{y_{Y_i}})$.  This would prove the needed bound. 

\vspace*{.5cm}
Intuitively, we just need to make the $d_1$ dimensions coming from $L(Y_1)$ have the largest size $O(M^{y_{Y_1}})$, and the next $d_2$ dimensions coming from $Y_2$ will need to have length $O(M^{y_{Y_2}})$ and so forth.  To formally show this by constructing $T$, it is convenient to interpret all subgroups instead as subspaces of $\mathbb{Q}^d$ with the standard Euclidean inner product and its induced norm.  Now apply a Gram-Schmidt orthogonalization procedure to the sequence $L(Y_1), L(Y_2), \dots, L(Y_t)$.  This yields subspaces $E_1, \dots E_t$ satisfying

\[
 E_1 = L(Y_1), \, E_1 +\dots + E_i = L(Y_1) + \dots + L(Y_i), \, E_i \perp E_j \, \text{for} \, i \neq j
\]

\vspace*{.5cm}
Take $T$ to be the Minkowski sum formed by cubes $T_i$ of side length $O(M^{y_{Y_i}})$ growing in the spaces $E_i$.  It is readily observed that $|T| = O\left(\prod M^{y_{Y_i}\cdot d_i}\right)$.  Denote by $P_{E_i}$ the orthogonal projection onto $E_i$.  If we can show $P_{E_i}(L(S)) \subset T_i$ for each $i$, then $L(S) \subset T$.  The proof would then be complete.

\vspace*{.5cm}
Select an arbitrary $x \in S$. That is,

\[
L(x) = L(x_{Y_1}) + \dots + L(x_{Y_t})
\]
where $x_{Y_j} \in S_{Y_j}$.  Observe that $L(x_{Y_i}) \in \text{ker} (P_{E_j})$ for $i < j$.  Also $\|L(x_{Y_j})\|_2 = O(M^{y_{Y_j}})$.  This implies
\[
 P_{E_i}(L(x)) = P_{E_i}(L(x_{Y_i})) + \dots + P_{E_i}(L(x_{Y_t}))
 \]
 and therefore
 \[
 \|P_{E_i}(L(x))\|_2 = O(M^{y_{Y_i}}) + \dots + O(M^{y_{Y_t}}) = O(M^{y_{Y_i}})
\]

As $T$ is permitted to be $O(M^{y_{Y_i}})$ in $E_i = \text{im} (P_i)$, we conclude that $P_i(S) \subset T$ if the hidden constant for $T$ large enough.
\end{proof}

This readily applies to the construction of Def. \ref{FP}:
\begin{theorem}[Non-negative Parallelepiped Theorem]\label{NPT}
From an optimal dual feasible vector $y$ supported on a flag $\mathbf{U}$, form a flag parallelepiped $S$.  Then $|\phi_j(S)| = O(M)$ for each $\phi_j$ in the HBL problem, and $|S| = \Theta(M^{s_{\text{HBL}}})$.
\end{theorem}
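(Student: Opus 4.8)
The plan is to reduce the theorem entirely to the two tools already in hand: Proposition \ref{ISP} for the count $|S|$, and the Growing Parallelepiped Lemma \ref{GPL} for the image bounds $|\phi_j(S)|$. First I would unpack Def. \ref{FP}. Choose independent subgroups $\mathbf{Y} = (Y_1, \dots, Y_t)$ with $Y_1 + \dots + Y_i = U_i$, so that $\text{rank}(Y_i) = \text{rank}(U_i) - \text{rank}(U_{i-1})$ with the convention $U_0 = \{0\}$, and let $y'$ be the associated dual vector, $y'_{Y_i} = y_{U_i} + \dots + y_{U_t}$. The one structural fact I need up front is that, since every $y_{U_k} \geq 0$, these tail sums are monotonically decreasing in $i$. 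This is exactly the ordering hypothesis required by Lemma \ref{GPL}, so the lemma will apply directly to the flag parallelepiped.

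For the size of $S$, Proposition \ref{ISP} immediately gives $|S| = \Theta(M^{y'^T \text{rank}(\mathbf{Y})})$, so it suffices to show the exponent equals $s_{\text{HBL}}$. I would write $y'^T \text{rank}(\mathbf{Y}) = \sum_i \big(\sum_{k \geq i} y_{U_k}\big)\big(\text{rank}(U_i) - \text{rank}(U_{i-1})\big)$ and exchange the order of summation, after which the inner sum telescopes. What remains is $\sum_k y_{U_k}\,\text{rank}(U_k) = y^T \text{rank}(\mathbf{U}) = \text{val}(y)$, which equals $s_{\text{HBL}}$ by optimality of $y$. That settles the first claim.

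The image bound is the step where Proposition \ref{ISP} alone is too weak, precisely because $y'$ need not satisfy the feasibility constraint (\ref{feasibility1}) on the independent subgroups $\mathbf{Y}$. Here I would invoke Lemma \ref{GPL} with $L = \phi_j$; the monotonicity noted above licenses its use. Setting $d_i := \text{rank}(\phi_j(U_i)) - \text{rank}(\phi_j(U_{i-1}))$, the lemma yields $|\phi_j(S)| = O\big(M^{\sum_i y'_{Y_i} d_i}\big)$. The same exchange-and-telescope computation as before, now applied to the ranks of $\phi_j(U_i)$ and using $\phi_j(U_0) = \{0\}$, collapses the exponent to $\sum_k y_{U_k}\,\text{rank}(\phi_j(U_k)) = C_j(y) \leq 1$ by feasibility of $y$. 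Since $M \geq 1$, this gives $|\phi_j(S)| = O(M^{C_j(y)}) = O(M)$, completing the proof.

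I expect no genuine obstacle here, because the real content has been front-loaded into Lemma \ref{GPL}. The only point demanding care is appreciating \emph{why} that lemma is phrased in terms of the nested ranks $\text{rank}(\phi_j(U_i))$ rather than the naive $\text{rank}(\phi_j(Y_i))$: it is exactly the $U_i$-indexed formulation that makes the telescoping close up and reproduce the flag quantities $\text{val}(y)$ and $C_j(y)$. In short, the flag-to-independent substitution in Def. \ref{FP} and the $U_i$-indexing in Lemma \ref{GPL} were engineered as a matched pair, and the proof is just the verification that the two summation exchanges telescope correctly onto the objective and the feasibility constraints of the original flag vector $y$.
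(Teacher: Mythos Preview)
Your proposal is correct and follows essentially the same approach as the paper: both apply Lemma \ref{GPL} to bound $|\phi_j(S)|$ (using the monotonicity of the tail sums $y'_{Y_i}$), perform the swap-and-telescope computation to reduce the exponent to $C_j(y)\le 1$, and cite Proposition \ref{ISP} for $|S|$. If anything, you are slightly more explicit than the paper, which simply asserts that $|S|=\Theta(M^{s_{\text{HBL}}})$ ``follows from Proposition \ref{ISP}'' without writing out the telescoping verification that $y'^T\text{rank}(\mathbf{Y})=\text{val}(y)=s_{\text{HBL}}$.
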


\begin{proof}
Let $y'$ be the associated dual vector of $S$ with independent subgroups $Y_1, \dots, Y_t$ as described in Def. \ref{FP}.  As $y$ has positive entries, $y_{Y_i}'$ are monotonically decreasing.  Consequently, Lemma \ref{GPL} applies.  It implies that 

  \begin{align*}
    |\phi_j(S) | = O(\prod\limits_i M^{y'_{Y_i} \cdot d_i}) &= O(\prod\limits_i M^{(y_{Y_i}+\dots + y_{Y_t}) \cdot d_i}) \\
    &= O(M^{\sum_i y_{Y_i}\cdot (d_1+\dots+d_i)}) &= O(M^{y^T \text{rank}(\phi_j(U)) }) = O(M)
\end{align*}

That $|S| = \Theta(M^{s_{\text{HBL}}})$ follows from Proposition \ref{ISP}
\end{proof}

This is the major theoretical result.  Combined with earlier results, it notably establishes the central claim:
\begin{corollary}
If one is able to produce a sufficient list of subgroups for the HBL primal, then one can determine an asymptotically optimal tiling shape. 
\end{corollary}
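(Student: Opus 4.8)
The plan is to chain together the results of the preceding two sections, since every step needed has already been established. First I would observe that possessing a sufficient list of subgroups $\mathbf{E}$ is exactly what is required to write down the HBL Primal LP of Definition \ref{primal} together with its dual (Definition \ref{dual}). Both are finite-dimensional linear programs: the primal is feasible (taking every $s_i$ large enough satisfies all the rank constraints $\sum_i s_i \cdot \text{rank}(\phi_i(H)) \geq \text{rank}(H)$) and its objective $1^T s$ is bounded below by $0$, so an optimum exists and strong duality holds. Hence there is an optimal dual feasible vector $y$ with $\text{val}(y) = s_{HBL}$.

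Next I would feed this optimal $y$ into Algorithm \ref{MD} and invoke Theorem \ref{NFT}, which produces a dual feasible vector $y'$ supported on a flag $\mathbf{U}$ with the same objective value. Because that theorem preserves both feasibility (Eq. \ref{feasibility1}, \ref{feasibility2}) and the objective (Eq. \ref{objective}), the resulting $y'$ is again an optimal dual solution, now with the structural advantage of being supported on a nested sequence of subgroups.

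Then I would apply the flag parallelepiped construction of Definition \ref{FP} to $y'$, obtaining a concrete tile $S$ and its associated dual vector supported on independent subgroups. By Theorem \ref{NPT}, this tile satisfies $|S| = \Theta(M^{s_{HBL}})$ and $|\phi_j(S)| = O(M)$ for every map $\phi_j$; these are precisely conditions (\ref{big}) and (\ref{small}) in the definition of an asymptotically optimal tiling. Finally, Algorithm \ref{tiling} supplies the explicit translation set $T$ that tiles $\mathbb{Z}^d$ by $S$, discharging the remaining requirement that $S$ actually tile the lattice. Assembling these facts yields the asymptotically optimal tiling shape claimed.

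I do not expect a genuine obstacle here, as all of the real difficulty has been absorbed into the flag conversion of Theorem \ref{NFT} and the growing-parallelepiped analysis of Lemma \ref{GPL} and Theorem \ref{NPT}. The single point warranting a sentence of care is the appeal to strong duality: one must confirm the primal LP is feasible and bounded so that the optimal dual value genuinely equals $s_{HBL}$. Once that is noted, the corollary is a direct citation of the established machinery, which is exactly why it can be stated as the culminating consequence of the section.
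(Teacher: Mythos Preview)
Your proposal is correct and follows essentially the same approach as the paper: solve the dual to obtain an optimal $y$, apply Theorem \ref{NFT} to move it onto a flag, then use the flag parallelepiped construction together with Theorem \ref{NPT} to conclude asymptotic optimality. The paper's proof is terser and omits the explicit justification of strong duality and the reference to Algorithm \ref{tiling} for the translation set, but your additional care on those points is welcome rather than a deviation.
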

\begin{proof}
 Solve the dual LP to get $y$.  Apply Theorem \ref{NFT} to produce $y'$ optimal and supported on a flag.  Then apply the flag parallelepiped construction to $y'$ yielding tile $S$, which is asymptotically optimal by Theorem \ref{NPT}.
\end{proof}

 \section{Two Cases of Exactly Optimal Tilings}
The preceding sections have focused on attaining asymptotic optimality.  We would like the tiling shape to meet the requirements of exact optimality, as given by equations \ref{exact1} and \ref{exact2}.  This would matter for communication avoiding applications in practice.  However, we do not have a characterization of when this is possible.  Instead, we will describe two simpler situations in which we can provide exactly optimal tilings.
 
\subsection{Rank One Maps}
We begin by discussing the case of rank 1 maps.  This could be regarded as a generalized n-body problem.  Detailed work on the communication patterns and bounds for the n-body problem was examined in \cite{DGKSY} and \cite{KY}.  This corresponds to arrays with single indices.  First, we demonstrate what $s_{HBL}$ is for this case and a method for obtaining asymptotic optimality.

\begin{proposition}
 Assume the maps $\phi_i$ are rank 1 with $i \in J$ and $|J| = n$, and the lattice is $\mathbb{Z}^d$.  If $\cap_i \, \text{ker}(\phi_i) = \emptyset$, then $s_{HBL}=d$.  Then a $d$ dimensional cube with sides $O(M)$ is asymptotically optimal.  Otherwise, the Primal LP of Def. \ref{primal} is infeasible.
\end{proposition}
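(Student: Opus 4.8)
The plan is to treat the three assertions separately, using throughout the fact that a rank one map satisfies $\text{rank}(\phi_i(H)) \in \{0,1\}$, with $\text{rank}(\phi_i(H)) = 1$ exactly when $H \not\subseteq \text{ker}(\phi_i)$. I would dispose of the infeasibility claim first. If $H_0 := \cap_i \text{ker}(\phi_i)$ is nontrivial, then instantiating the primal constraint of Def. \ref{primal} at $H = H_0$ demands $\sum_i s_i \cdot \text{rank}(\phi_i(H_0)) \ge \text{rank}(H_0) \ge 1$; but every term on the left vanishes, since $H_0 \subseteq \text{ker}(\phi_i)$ forces $\phi_i(H_0) = \{0\}$. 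The resulting requirement $0 \ge 1$ is unsatisfiable, so the primal LP is infeasible.

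For the remaining assertions I assume $\cap_i \text{ker}(\phi_i) = \{0\}$. The lower bound $s_{HBL} \ge d$ follows from the single constraint at $H = \mathbb{Z}^d$: each nonzero rank one map has $\text{rank}(\phi_i(\mathbb{Z}^d)) = 1$, so feasibility forces $\sum_i s_i \ge d$. The matching upper bound is the crux. I would represent each $\phi_i$ by a primitive integer covector $v_i$, so that $\text{ker}(\phi_i)$ is the hyperplane annihilated by $v_i$. Because the kernels intersect only in $\{0\}$, the covectors span $(\mathbb{Q}^d)^{*}$, and I may select indices $i_1, \dots, i_d$ with $v_{i_1}, \dots, v_{i_d}$ linearly independent. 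I then propose the candidate $s$ with $s_{i_j} = 1$ for $j = 1, \dots, d$ and $s_i = 0$ otherwise, whose objective is $1^T s = d$.

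The key step is verifying feasibility of this candidate at an arbitrary subgroup $H$ of rank $r$. The constraint reduces to showing that at least $r$ of the chosen covectors fail to annihilate $H$, equivalently that at most $d - r$ of them lie in the annihilator of $H$. That annihilator is a subspace of $(\mathbb{Q}^d)^{*}$ of dimension $d - r$, and since $v_{i_1}, \dots, v_{i_d}$ are linearly independent, no more than $d - r$ of them can lie in any $(d-r)$-dimensional subspace, which is precisely the needed bound. This yields $s_{HBL} \le d$, hence $s_{HBL} = d$. I expect this linear-algebra count to be the main obstacle, since it is exactly where the hypothesis $\cap_i \text{ker}(\phi_i) = \{0\}$ is converted into a uniform guarantee across all subgroups at once; everything else is bookkeeping.

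Finally, asymptotic optimality of the cube is a direct invocation of Lemma \ref{GCL}. Taking $Y = \mathbb{Z}^d$, $k = 1$, $h = d$, and $L = \phi_j$, the lemma gives $|S| = \Theta(M^{d}) = \Theta(M^{s_{HBL}})$ and $|\phi_j(S)| = O(M^{\text{rank}(\phi_j(\mathbb{Z}^d))}) = O(M)$, which are exactly Eq. \ref{big} and Eq. \ref{small}. Since translations of an axis-aligned cube tile $\mathbb{Z}^d$ trivially, this completes the verification of asymptotic optimality.
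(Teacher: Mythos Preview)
Your proof is correct and follows essentially the same outline as the paper's: infeasibility from the primal constraint at the common kernel, the lower bound from $H=\mathbb{Z}^d$, the upper bound by selecting $d$ maps with trivially intersecting kernels and taking $s$ to be their indicator, and the cube's asymptotic optimality from Lemma~\ref{GCL}. The only difference is cosmetic: where the paper verifies feasibility of this $s$ by iteratively intersecting $T$ with the chosen kernels and counting rank drops, you run the equivalent dimension count in the dual space via the annihilator $H^{\perp}$, which is a slightly cleaner way to say the same thing.
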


\begin{proof}
First, suppose that $\cap_i \, \text{ker}(\phi_i)$ is nonempty.  Then take $E_1$ to be a non-zero element of this intersection; as kernels are subspaces, the $\langle E_1 \rangle$ is also in the kernel.  This implies the corresponding constraint in Def. \ref{primal} is
\[
 0^T s \geq 1
\]

which can't be satisfied. So the LP is infeasible, meaning one could get ``infinite'' data re-use.  See the appendix for an example.

\vspace*{.5cm}
Now suppose $\cap \text{ker}(\phi_i) = \emptyset$.  One subgroup you could use is $\mathbb{Z}^d$ itself.  By the rank 1 assumption, the inequality constraint in Def. \ref{primal} corresponding to this subgroup is

\[
 1^T s \geq d
\]

\vspace*{.5cm}
This implies $s_{HBL} \geq d$.  Now we exhibit a feasible primal vector $s$ for which $1^T s = d$ to complete the proof.  One may select a subset $J' \subset J$ with $|J'| = d$ such that $\cap_{i \in J'} \text{ker}(\phi_i) = \emptyset$.  This follows by induction; start with $H_0 = \mathbb{Z}^d$.  Then recurse by $H_i = H_{i-1}\cap \text{ker}(\phi_i)$.  If $\text{rank}(H_i) = \text{rank}(H_{i-1})-1$ then include $i$ in $J'$.  Because belonging to $\text{ker}(\phi_i)$ amounts to satisfying a single linear equation, the rank may only decrease by $1$.  Choose the primal variable $s$ to be $1_{J'}$.  

\vspace*{.5cm}
It remains to establish the feasibility of this $s$.  The argument may proceed recursively as above.  This time label the elements of $J'$ to be $i_1, \dots , i_d$, and let $T$ denote any subgroup.  Set $H_{i_0} = T$ and $H_{i_j} = H_{i_{j-1}} \cap \text{ker}(\phi_i)$ recursively.  Again, ranks of the $H_{i_j}$ decrease by $1$ or stay the same, compared to the rank of $H_{i_{j-1}}$.  The end result is $\{0\}$; this implies $T$ is not a strict subset of at least $\text{rank}(T)$ of the kernels associated with $J'$.  Consequently $s$ satisfies the constraint of Def. \ref{primal} for the subgroup $T$:

\[
 s^T \text{rank}(\phi(T)) \geq \text{rank}(T)
\]

This implies the feasibility of $s$ and establishes $s_{HBL} = d$.  Observe the dual variable indicating the space $\mathbb{Z}^d$ achieves the value $s_{HBL}$ as well.  By Lemma \ref{GCL}, a cube with sides $O(M)$ is asymptotically optimal.
\end{proof}

This establishes that the running Algorithm \ref{tiling} on $\mathbb{Z}^d$ produces an asymptotically optimal tiling.  For exact optimality, we must restrict to the case where there are $d$ rank-one maps with empty kernel intersection.

\begin{lemma}[Basis Lemma]\label{BL}
The subgroup $\cap_{j \neq i} \text{ker}(\phi_j)$ is rank 1; take $e_i$ to be a non-zero element of smallest Euclidean norm from this subgroup.   Then each subgroup $\text{ker}(\phi_i)$ contains the independent elements $e_1, \dots, e_{i-1}, e_{i+1}, \dots, e_d$.
\end{lemma}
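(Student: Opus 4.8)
The plan is to pass to the rational vector space $\mathbb{Q}^d$, where each rank-one map $\phi_j$ is, up to the embedding of its rank-one image into $\mathbb{Z}^{d_j}$, a single linear functional $\psi_j$ satisfying $\text{ker}(\phi_j)\otimes\mathbb{Q} = \text{ker}(\psi_j)$. By rank-nullity each such kernel is a hyperplane, i.e. has rank $d-1$. The hypothesis $\cap_j \text{ker}(\phi_j) = \{0\}$ says the common kernel of all $d$ functionals is trivial, which forces $\psi_1, \dots, \psi_d$ to be linearly independent, hence a basis of the dual space $(\mathbb{Q}^d)^*$. The first claim is then a dimension count: deleting $\psi_i$ leaves $d-1$ independent functionals, so $\cap_{j\neq i}\text{ker}(\psi_j)$ is a line in $\mathbb{Q}^d$, and intersecting with $\mathbb{Z}^d$ yields a rank-one subgroup.

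Such a subgroup is infinite cyclic, so it contains a non-zero element of smallest Euclidean norm, namely a primitive generator unique up to sign; this legitimizes the definition of $e_i$. The containment statement is then immediate from the indexing convention: by construction $e_j \in \cap_{k\neq j}\text{ker}(\phi_k)$, and whenever $j \neq i$ the index $i$ is one of the indices $k \neq j$, so $e_j \in \text{ker}(\phi_i)$. Thus all of $e_1, \dots, e_{i-1}, e_{i+1}, \dots, e_d$ lie in $\text{ker}(\phi_i)$.

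For independence I would establish the stronger fact that the full family $e_1, \dots, e_d$ is a basis of $\mathbb{Q}^d$, from which any $d-1$ of them are automatically independent. Suppose $\sum_j c_j e_j = 0$ and apply $\psi_i$. Since $\psi_i(e_j) = 0$ for every $j \neq i$ by the containment just shown, the relation collapses to $c_i\,\psi_i(e_i) = 0$. Here $\psi_i(e_i) \neq 0$: otherwise $e_i$ would lie in $\text{ker}(\phi_i)$ as well as in $\cap_{j\neq i}\text{ker}(\phi_j)$, placing it in $\cap_j \text{ker}(\phi_j) = \{0\}$ and contradicting $e_i \neq 0$. Hence $c_i = 0$, and letting $i$ range over all indices gives the claim.

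The only genuinely delicate point is the rank bookkeeping, that is, upgrading the trivial total intersection to linear independence of the functionals (so that deleting one gives rank exactly one, not more) and keeping the distinction between the $\mathbb{Z}$-ranks of the subgroups and the $\mathbb{Q}$-dimensions of their spans straight throughout. Once that is in place, the containment and independence are direct consequences of the indexing convention and the nonvanishing of $\psi_i(e_i)$, so I expect no further obstacle.
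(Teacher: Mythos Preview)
Your proof is correct and follows essentially the same route as the paper's. The paper argues the rank-one claim by counting (each intersection with a kernel drops rank by at most one, and if any drop failed the full intersection would not be trivial), which is exactly your linear-independence-of-functionals statement in different clothing; for independence the paper assumes some $e_i$ lies in the span of the remaining $e_j$ and derives $e_i\in\bigcap_j\ker\phi_j=\{0\}$, which is the contrapositive of your $\psi_i(e_i)\neq 0$ step.
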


\begin{proof}
We must check the $e_i$ are well defined, and that they are linearly independent.

\vspace*{.5cm}
Every time we intersect with one of the kernels, the rank reduces by 1.  The empty intersection property of the $d$ kernels implies this; every intersection adds a linear constraint, and if one of the linear constraints turned out to be redundant then $d$ intersections would not result in the empty set.

\vspace*{.5cm}
Lastly, we make sure that the $e_i$ are independent.  If not, then some $e_i$ is in the span of the other $e_{i'}$; however, the other $e_{i'}$ are contained in $\text{ker}(\phi_{i})$.  This means $e_i \in \text{ker}(\phi_i)$ is as well.  Then $e_i$ lies in the intersection of all the kernels, which by assumption is empty set.
\end{proof}

This basis is critical in the following; 
\begin{proposition} \label{exact}
 Let $e_i$ be as in Lemma \ref{BL}.  Then the sets $S := \{ \sum a_i e_i  | a_i \in \mathbb{Z}, \, 0 \leq a_i \leq \floor{M/d} -1 \}$ are exactly optimal.  That is, the output of Algorithm 1 on independent elements $e_1, \dots, e_d$ of $\mathbb{Z}^d$ meets the requirements of Eq. \ref{exact1} and \ref{exact2}.
\end{proposition}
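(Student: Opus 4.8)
The plan is to verify directly the two defining conditions of exact optimality, equations \ref{exact1} and \ref{exact2}, for the parallelepiped $S$ built on the basis $e_1, \dots, e_d$ of Lemma \ref{BL}. The preceding proposition already gives $s_{\text{HBL}} = d$, so the substantive first step is to pin down the constant $\gamma$; after that, the size and memory conditions reduce to bookkeeping with $\floor{M/d}$ together with the structural facts from Lemma \ref{BL}.

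To compute $\gamma$, recall its definition requires minimizing $\prod_i s_i^{s_i}$ over $s \in \mathcal{P}$ with $1^T s = d$. Since we have restricted to exactly $d$ maps, this feasible set is a subset of the simplex $\{s : \sum_i s_i = d, \, s_i \geq 0\}$, and it contains the uniform vector $s = (1, \dots, 1)$, which is exactly the optimal primal vector $1_{J'}$ (here $J' = J$) shown feasible in the previous proposition. Rewriting the objective as $\exp(\sum_i s_i \log s_i)$, the exponent is the negative entropy: convex and symmetric, hence minimized over the entire simplex at its center $s_i = 1$, where it equals $\sum_i 1 \cdot \log 1 = 0$. Because that center already lies in the smaller feasible set, the minimum over $\mathcal{P} \cap \{1^T s = d\}$ is also attained there with value $1$, so the polytope constraints are inactive and $\gamma = \tfrac{1}{d^d}\cdot 1 = d^{-d}$.

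With $\gamma$ in hand, for the size condition \ref{exact1} I would use independence of the $e_i$ (Lemma \ref{BL}) to get $|S| = \floor{M/d}^d$ exactly; since $\floor{M/d} = (M/d)(1 - o(1))$ as $M \to \infty$, this gives $|S| = (1-o(1))\, d^{-d} M^d = (1-o(1))\,\gamma M^{s_{\text{HBL}}}$, matching \ref{exact1}. For the memory condition \ref{exact2}, Lemma \ref{BL} supplies $e_j \in \text{ker}(\phi_i)$ for every $j \neq i$, so for $x = \sum_j a_j e_j \in S$ one has $\phi_i(x) = a_i\,\phi_i(e_i)$ with $\phi_i(e_i) \neq 0$, since otherwise $e_i$ would lie in $\cap_j \text{ker}(\phi_j) = \{0\}$. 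The map $a_i \mapsto a_i\,\phi_i(e_i)$ is therefore injective on the $\floor{M/d}$ admissible values of $a_i$, giving $|\phi_i(S)| = \floor{M/d}$, and summing over the $d$ maps yields $\sum_i |\phi_i(S)| = d\floor{M/d} \leq d\cdot(M/d) = M$.

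The main obstacle is the computation of $\gamma$: in general one must correctly identify the negative-entropy minimizer over the feasibility polytope $\mathcal{P} \cap \{1^T s = s_{\text{HBL}}\}$, which could be constrained away from the simplex center. The fortunate simplification in this case is that, with $d$ maps whose weights sum to $d$, the unconstrained simplex minimizer is the uniform point $s_i = 1$, and it is itself feasible, so $\gamma$ collapses cleanly to $d^{-d}$. Once this constant is identified, both remaining verifications are routine floor-function estimates relying only on the independence and kernel-containment properties of the $e_i$.
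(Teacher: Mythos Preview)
Your proposal is correct and follows essentially the same route as the paper: identify $\gamma = d^{-d}$ via the entropy argument at the uniform point $s_i = 1$, count $|S| = \lfloor M/d\rfloor^d$ by independence of the $e_i$, and use the kernel containments from Lemma \ref{BL} to get $|\phi_i(S)| = \lfloor M/d\rfloor$. The paper additionally remarks that $s=(1,\dots,1)$ is in fact the \emph{unique} primal optimum (from the constraints $s_i\ge 1$ imposed by the rank-one subgroups $\langle e_i\rangle$), which gives an alternative one-line justification for the value of $\gamma$; your entropy-minimizer argument is equivalent and arguably cleaner, and your explicit check that $\phi_i(e_i)\neq 0$ is a detail the paper leaves implicit.
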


\begin{proof}
 The first part of this section established that the optimal $s_{HBL}$ is $d$ and comes from each $s_i = 1$. This is in fact the unique solution to the primal LP of Def. \ref{primal} so it is by default the minimizer of $\gamma$ in Eq. \ref{exact1}.  Alternatively, evenly distributed values $s_i$ maximize entropy and consequently would minimize $\gamma$.  Plugging this in, $c_i = \frac{1}{d}$ and $\gamma = \frac{1}{d^d}$.
 
 \vspace*{.5cm}
 It remains to confirm that $|S| = (M/d)^d + O(1)$ and $\sum_i |\phi_i(S)| \leq M$.  First, by independence of the $e_i$, there are $\floor{M/d}^d$ lattice points enclosed.  Now if $M = a\cdot d + r$,
 \[
  (M/d)^d = a^d \cdot (1 + \frac{r}{M})^d = \floor{M/d}^d \cdot (1 + \frac{r}{M})^d \leq \floor{M/d}^d e^{r/M} = \floor{M/d}^d (1 + o(1))
 \]

 This establishes Eq. \ref{exact1}.  For the memory bound constraint, consider $\phi_i(S)$.  Applied to any point $z \in S$, it outputs $a_i \cdot e_i$.  As $a_i$ only varies between $\floor{M/d}$ values, the result follows.
 
\end{proof}

We may summarize the approach to tiling in Proposition \ref{exact} in the the following algorithm.

\begin{algorithm} [H]
\begin{algorithmic}[1]
\caption{Exactly Optimal Tiling, Rank One Maps}
\label{rankone}
\State Input: rank one maps $\{\phi_i\}_{i=1}^d$ with coordinate representations $a_i \in \mathbb{Z}^d$, satisfying $\cap \text{ker}(\phi_i) = \{0\}$, memory parameter $M$
\State Output: tile $S$ and translations by $T$ that tile $\mathbb{Z}^d$
\State Initialize $e_1, \dots, e_d \in \mathbb{Z}^d$
\For{i = 1 to d}
\State $A \leftarrow (a_1, \dots, a_{i-1}, a_{i+1}, \dots, a_d)^T$
\State $U, \, D, \, V \leftarrow \text{Smith Normal Form}(A)$
\State $e_i \leftarrow  \text{column 1 of }V$
\EndFor
\State $S, \, T \leftarrow$ Algorithm \ref{tiling} on input subgroup $U_1 = \mathbb{Z}^d$, its independent elements $e_1, \dots, e_d$, memory parameter $M$, and scaling $y_{\mathbb{Z}^d}=1$

\Return $S$ and $T$
\end{algorithmic}
\end{algorithm}

\vspace*{.5cm}

The new component of the algorithm is calculating the independent elements $e_i$. With this in mind, we examine the calculation of the $e_i$.  Recall $e_i \in \cap_{j \neq i} \text{ker}(\phi_j)$ of smallest Euclidean norm are used in Proposition \ref{exact}.  The implies $e_i$ is in the kernel of matrix $A_i := (a_1,\dots, a_{i-1}, a_{i+1}, \dots, a_d)^T$.  Decompose this matrix by Smith Normal Form, giving the representation $U D V^{-1}$.  Because the rank of $A_i$ is $d-1$, only the first diagonal entry of $D$ is $0$.  This means the kernel of the matrix is exactly what $V^{-1}$ maps to $(1, 0,\dots, 0)^T$, meaning multiples of the first column of $V$.  As $V$ is unimodular, this column is also the shortest integer valued multiple of itself.

\subsection{Rank d-1 Maps}
This section follows the rank 1 case very closely, and consequently is kept brief.  As an example, this setting includes the case of matrix multiplication and therefore much of linear algebra.  We again discuss asymptotic optimality, followed by exact optimality.

\begin{proposition}
 In the case where all maps are rank $d-1$, the optimal dual vector $y$ can be taken to have the subspace generated by the kernels of all the maps as its only nonzero coordinate. Call this subspace $H$ and let $k= \text{rank}(H)$.  Then $y_W = 1/(k-1)$ is optimal for the dual LP of Def. \ref{dual}. In addition, $s_{HBL} = k/(k-1)$.
\end{proposition}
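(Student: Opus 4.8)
The plan is two-sided: I would exhibit a feasible dual vector and a feasible primal vector both of value $\frac{k}{k-1}$, so that weak duality forces $s_{HBL}=\frac{k}{k-1}$ and simultaneously certifies that the proposed $y$ is dual optimal. The structural fact I would lean on throughout is that rank $d-1$ maps have rank $1$ kernels, and that $H:=\sum_i \ker(\phi_i)$ contains every $\ker(\phi_i)$, so $H\cap\ker(\phi_i)=\ker(\phi_i)$ has rank $1$ for all $i$.

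For the dual, take $y$ supported on $H$ alone with $y_H=\frac{1}{k-1}$. Feasibility (\ref{feasibility1}) amounts to $\text{rank}(\phi_i(H))\le k-1$ for every $i$, and the rank--nullity identity for $\phi_i|_H$ gives $\text{rank}(\phi_i(H))=\text{rank}(H)-\text{rank}(H\cap\ker(\phi_i))=k-1$ exactly, so each constraint in fact holds with equality. The objective (\ref{objective}) is then $y_H\cdot\text{rank}(H)=\frac{k}{k-1}$, and weak duality yields $s_{HBL}\ge\frac{k}{k-1}$.

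The matching primal bound is the step I expect to be the main obstacle. Since $H$ has rank $k$ and is generated by the rank $1$ subgroups $\ker(\phi_i)$, I would first extract $k$ of these kernels, say $\ker(\phi_1),\dots,\ker(\phi_k)$, whose generators are independent and span $H$ over $\mathbb{Q}$ (a matroid-style basis extraction from a generating family of lines). Setting $s_i=\frac{1}{k-1}$ for $i\le k$ and $s_i=0$ otherwise gives $1^T s=\frac{k}{k-1}$, so it remains only to verify primal feasibility. Writing $\text{rank}(\phi_i(T))=\text{rank}(T)-\epsilon_i$ with $\epsilon_i\in\{0,1\}$ (the nullity contribution of a rank $1$ kernel), where $\epsilon_i=1$ exactly when $T$ meets the line $\ker(\phi_i)$ nontrivially, the constraint $\frac{1}{k-1}\sum_{i=1}^k \text{rank}(\phi_i(T))\ge\text{rank}(T)$ rearranges to $\text{rank}(T)\ge m$, where $m:=|\{\,i\le k:\ \text{rank}(T\cap\ker(\phi_i))=1\,\}|$. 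Here the independence of the chosen kernels does the work: selecting one nonzero lattice point from each of the $m$ intersections produces $m$ independent elements of $T$, so indeed $\text{rank}(T)\ge m$. Thus $s$ is feasible and $s_{HBL}\le\frac{k}{k-1}$.

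Combining the two inequalities gives $s_{HBL}=\frac{k}{k-1}$, whence the dual vector $y$, being feasible of value $s_{HBL}$, is optimal. I would carry out the primal feasibility check directly for all subgroups $T$ rather than for a sufficient list $\mathbf{E}$; this is legitimate because Theorem \ref{fpt} makes satisfying the constraint for every $T$ equivalent to genuine feasibility, and it also sidesteps having to produce $\mathbf{E}$ explicitly.
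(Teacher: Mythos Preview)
Your proposal is correct and follows essentially the same strategy as the paper: exhibit the dual vector $y_H=1/(k-1)$ (feasible because $\text{rank}(\phi_i(H))=k-1$), then certify optimality by the primal vector $s=\frac{1}{k-1}\,1_A$ for $k$ maps with independent kernels spanning $H$, with feasibility reducing to the observation that at most $\text{rank}(T)$ of the chosen independent kernels can meet $T$ nontrivially. The only cosmetic difference is that you invoke weak duality after matching the two values, whereas the paper phrases the conclusion via strong duality; your formulation is in fact the cleaner one, since matching primal and dual feasible values already forces optimality of both.
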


\begin{proof}
 As $H$ is sent to a rank $k-1$ space by each of the $\phi_i$, $y$ is indeed dual feasible with objective value $k/(k-1)$.
 
 \vspace*{.5cm}
 We must show that that this matches the HBL lower bound, as then by strong duality $y$ is dual optimal.  Propose the primal value $s= 1/(k-1) \cdot 1_A$, where $1_A$ indicates any $k$ maps whose (one-dimensional) kernels generate the rank $k$ space.  Essentially, this is saying the kernels of these maps are independent.
 \vspace*{.5cm}
 
 Then consider any rank $l$ subgroup $T$, and its images under the maps in $A$.  By independence of kernels in the construction of $A$, only $l$ of the maps might send this to a rank $l-1$ group, the others send it to a $l$ dimensional space.  As there are $k$ non-zero $s_i$, the LHS of the constraint given by subgroup $T$ in the primal LP of Def. \ref{primal} is
 \begin{align*}
\text{rank}(\phi(T))^Ts &= \sum\limits_{\phi_i \in A} s_i \cdot \text{rank}(\phi_i(T)) \\ 
&= \frac{1}{k-1} \sum\limits_{\phi_i \in A} \text{rank}(\phi_i(T)) \\
&= \frac{1}{k-1} \left[ (l-1) \cdot \#\{\phi \in A | \text{ker}(\phi) \cap T \neq \{0\}\} + l \cdot \#\{\phi \in A | \text{ker}(\phi) \cap T = \{0\}\} \right]   \\
 &\geq \frac{1}{k-1} \left[ (l-1)\cdot l + l \cdot (k-l) \right] \\
 &= (l^2-l+lk-l^2)/(k-1)= l\cdot(k-1)/(k-1)=l
\end{align*}
Meanwhile, the RHS is $l$, so the constraint is satisfied.
\end{proof}

Similar to the rank 1 case, for exact optimality, restrict to when the kernels of the $\phi_i$ are independent.  Again let $e_i$ denote a non-zero smallest Euclidean norm representative of $\text{ker}(\phi_i)$, and let $E$ be the subgroup they span.
\begin{proposition}
Suppose the number of maps is equal to $k$ and the kernels are independent. Form the set $S := \{ \sum a_i \cdot e_i  | a_i \in \mathbb{Z}, 0 \leq a_i \leq \floor{\frac{M}{k}}^{\frac{1}{k-1}}-1 \}$.  That is, apply Algorithm \ref{tiling} to the independent elements $e_i$ of subgroup $E$, with scaling $y_E = 1/(k-1)$ and memory parameter $M/k$.  Then $S$ meets the criteria of Eq. \ref{exact1} and \ref{exact2} for exact optimality.
\end{proposition}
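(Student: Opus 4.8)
The plan is to mirror the rank one argument of Proposition \ref{exact}, with the roles of the kernels reversed. Throughout write $N := (M/k)^{1/(k-1)}$, so that $S$ is the box $\{\sum_i a_i e_i : 0 \le a_i \le \floor{N}-1\}$ built on the $k$ independent generators $e_i$. First I would pin down the constant $\gamma$. The preceding proposition already gives $s_{HBL} = k/(k-1)$ with optimal primal vector $s_i = 1/(k-1)$ on all $k$ maps. Since $\log \prod_i s_i^{s_i}$ is the (convex) negative entropy, the uniform vector maximizes entropy on the simplex $\{1^T s = s_{HBL},\, s \ge 0\}$ and hence minimizes $\prod_i s_i^{s_i}$ there; as this uniform vector is exactly the feasible $s$ above, it realizes the minimum in the definition of $\gamma$. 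Substituting $\prod_i (1/(k-1))^{1/(k-1)} = (k-1)^{-k/(k-1)}$ and $s_{HBL}^{s_{HBL}} = (k/(k-1))^{k/(k-1)}$ then gives $\gamma = k^{-k/(k-1)} = k^{-s_{HBL}}$.

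Second, for Eq. \ref{exact1} I would count $|S|$. By independence of the $e_i$ one has $|S| = \floor{N}^{k}$, and $N^{k} = (M/k)^{k/(k-1)} = k^{-s_{HBL}} M^{s_{HBL}} = \gamma M^{s_{HBL}}$. The floor correction is handled exactly as in Proposition \ref{exact}: since $N \to \infty$ we get $\floor{N}^{k} = (1-o(1))\,N^{k}$, so $|S| = (1-o(1))\,\gamma M^{s_{HBL}}$, establishing Eq. \ref{exact1}. Tiling of $\mathbb{Z}^d$ is automatic because $S$ is the output of Algorithm \ref{tiling} on the independent elements $e_1, \dots, e_k$ of $E$.

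Third, and this is where the rank $d-1$ structure enters, I would establish Eq. \ref{exact2} by showing $|\phi_i(S)| = \floor{N}^{k-1}$. Here $e_i$ spans $\text{ker}(\phi_i)$, so $\phi_i(e_i)=0$ and $\phi_i(x) = \sum_{j\neq i} a_j\,\phi_i(e_j)$ for $x=\sum_j a_j e_j \in S$. The key step is that the $k-1$ vectors $\{\phi_i(e_j)\}_{j\neq i}$ are linearly independent: since $E=\langle e_1,\dots,e_k\rangle$ has rank $k$ and $\text{ker}(\phi_i)$ has rank $1$, independence of the kernels forces $E\cap\text{ker}(\phi_i)=\langle e_i\rangle$, whence $\text{rank}(\phi_i(E)) = k-1$ and the $k-1$ spanning images must be independent. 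Consequently $\phi_i$ is injective on the coefficient box, $|\phi_i(S)| = \floor{N}^{k-1} \le N^{k-1} = M/k$, and summing over the $k$ maps yields $\sum_i |\phi_i(S)| \le M$, which is Eq. \ref{exact2}.

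The main obstacle — really the only non-routine point — is the independence of $\{\phi_i(e_j)\}_{j\neq i}$, i.e. the rank computation $\text{rank}(\phi_i(E)) = k-1$ driven by independence of the kernels. This is precisely what makes each image a genuine $(k-1)$-dimensional box with exactly $\floor{N}^{k-1}$ points, and hence delivers the tight per-map bound $M/k$ needed to sum to $M$; everything else is the same floor bookkeeping as in Proposition \ref{exact}.
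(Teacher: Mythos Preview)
Your proposal is correct and follows essentially the same approach as the paper, which itself gives only a sketch pointing back to the rank-one argument of Proposition~\ref{exact}. You fill in precisely the details the paper leaves implicit: the computation $\gamma = k^{-k/(k-1)}$ via the entropy argument, the count $|S| = \floor{N}^{k}$ with the $(1-o(1))$ floor correction, and the per-map image bound $|\phi_i(S)| = \floor{N}^{k-1} \le M/k$ obtained from $\text{rank}(\phi_i(E)) = k-1$.
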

\begin{proof} (Sketch). 
 We established that $s_i = 1/(k-1)$ has $1^T s = s_{\text{HBL}}$.  Moreover, because the values are evenly distributed, it minimizes $\gamma$.  Plugging this in, $c_i = 1/k$, $\gamma = (\frac{1}{k})^{k/(k-1)}$.
 
 \vspace*{.5cm}
 The remainder follows analagously the argument of rank one maps: show that $M/d$ being rounded induces $1 +o(1)$ relative difference between $\gamma \cdot M^{s_{\text{HBL}}}$ and $|S|$ for Eq. \ref{exact1} to be satisfied, and then quickly confirm Eq. \ref{exact2} holds.
 
 \vspace*{.5cm}
\end{proof}

\section{Acknowledgments}

This research is supported in part by NSF grants ACI-1339676, 
and DOE grants DOE CS-SC0010200, DOE DE-SC008700, DOE AC02-05CH11231, 
ASPIRE Lab industrial sponsors and affiliates Intel, Google, Hewlett-Packard,
Huawei, LGE, NVIDIA, Oracle, and Samsung. Other industrial supporters are Mathworks and Cray.
Any opinions, findings,
conclusions, or recommendations in this paper are solely those of the
authors and does not necessarily reflect the position or the
policy of the sponsors.

\newpage

\printbibliography

\newpage
\begin{appendices}
\section{Examples}\label{examples}
We will use this appendix to concretely demonstrate certain key points and techniques in the paper.  Each example emphasizes different aspects.
\subsection{Rank-one maps}
\begin{figure}
\centering
    \begin{minipage}{.5\textwidth}
        \begin{tikzpicture}[scale=.5]
            % setup the nodes
            \foreach \x in {0,...,10}
            \foreach \y in {0,...,10}
            {
            \node[mystyle1](\x-\y) at (\x,\y){};
            }
            \node[mystyle2](0-0) at (0,0){};
            \node[mystyle2](1-3) at (1,3){};
            \node[mystyle2](2-6) at (2,6){};
            \node[mystyle2](2-1) at (2,1){};
            \node[mystyle2](4-2) at (4,2){};
            \node[mystyle2](3-4) at (3,4){};
            \node[mystyle2](4-7) at (4,7){};
            \node[mystyle2](5-5) at (5,5){};
            \node[mystyle2](6-8) at (6,8){};
            
            \node[mystyle3](0-1) at (0,1){};
                        \node[mystyle3](0-2) at (0,2){};
            \node[mystyle3](0-3) at (0,3){};
                        \node[mystyle3](0-4) at (0,4){};
                        
            \node[mystyle4](3-9) at (3,9){};
            \node[mystyle4](6-3) at (6,3){};

        \end{tikzpicture}
    \end{minipage}%
    \caption{}{Red: elements of S \\ Blue: representatives of other cosets \\ Green: two starting points of copies of S, extending coverage of the red coset}
\end{figure}

Consider $\mathbb{Z}^2$ and maps
\[
\phi_1(x, y) = 3x-y
\]
\[
\phi_2(x,y) = x-2y
\]
The kernels are respectively $\langle e_1 + 3 e_2 \rangle$ and $\langle 2 e_1 + e_2 \rangle$.

\vspace*{.5cm}
Figure A.1 depicts the tile shape $S$ that is produced by Algorithm \ref{rankone} when $M=6$.  It also shows a representative of the other 4 cosets of $\mathbb{Z}^2 / \langle e_1 + 3 e_2, 2e_2 + e_3 \rangle$.  That is, it depicts $T_3$ of Algorithm \ref{tiling}.  $T_2$ is $\{0\}$ for this example, because $\langle e_1 + 3 e_2, 2e_2 + e_3 \rangle$ has the same rank as $\mathbb{Z}^2$.  The two green dots correspond to two of the smallest elements of $T_1$, which accounts for the finite size of $M$.  If we were tiling, copies of $S$ would be translated to start there.

\vspace*{.5cm}
We also alluded to a situation in which infinite data re-use is possible, when the HBL primal LP is infeasible.  Consider the computation lattice is $\mathbb{Z}^2$ with a single map $\phi_1(x,y) = x$.  The entire tile $\langle e_2 \rangle$ is mapped to $0$.  So with $M=1$, we could perform an infinite number of calculations.

\subsection{Multiple Tilings and One with non-zero $T_2$}
Consider the following loop nest:

\vspace*{.5cm}
Loop over $e_1$,$e_2$,$e_3$,$e_4$ \\ 
\-\hspace{1cm} inner loop($A_1[e_1,e_3], \, A_2[e_2,e_4], \, A_3[e_1,e_2,e_3+e_4], \, A_4[e_1+e_2, e_3,e_4]$)

\vspace*{1cm}
This corresponds to the HBL problem in a $d=4$ dimensional lattice, and linear maps
\[ \phi_1 = \left( \begin{array}{cccc}
1 & 0 & 0 & 0 \\
0 & 0 & 1 & 0 \end{array} \right), \, 
\phi_2 = \left( \begin{array}{cccc}
0 & 1 & 0 & 0 \\
0 & 0 & 0 & 1 \end{array} \right), \,
\phi_3 = \left( \begin{array}{cccc}
1 & 0 & 0 & 0 \\
0 & 1 & 0 & 0 \\
0 & 0 & 1 & 1 \end{array} \right), \, 
\phi_4 = \left(\begin{array}{cccc} \,
1 & 1 & 0 & 0 \\
0 & 0 & 1 & 0 \\
0 & 0 & 0 & 1 \end{array} \right)
 \]
 As the lattice is $\mathbb{Z}^4$, we can efficiently calculate the subgroups needed to formulate the primal.  They are:
 
\vspace*{.25cm} 
 $
 \langle e_3-e_4 \rangle, \, \langle e_1-e_2 \rangle, \, \langle e_1-e_2, e_3-e_4 \rangle, \,\langle e_2, e_4 \rangle , $

 $ \langle e_1, e_3 \rangle, \, \langle e_2,e_3,e_4 \rangle, \, \langle e_1, e_2, e_4 \rangle, \, \langle e_1, e_2, e_4 \rangle, \, \langle e_1, e_2, e_3 \rangle, \, \langle e_1,e_2,e_3,e_4 \rangle
 $
 
 \vspace*{.25cm}
When we solve the primal LP using the computational algebra software Magma, $s = (0, .5, .5, .5)$ so $s_{\text{HBL}} = 1.5$.

\vspace*{.5cm}
 Solving the dual LP with Magma, we get a solution supported on subgroups
 \[
 \langle e_1,e_2,e_3,e_4 \rangle, \, \langle e_1-e_2, e_3-e_4 \rangle
 \]
 with dual values of $.25$ for each.  One flag decomposition (Def. \ref{FP}) of this flag uses subgroups
 \[
 \langle e_1-e_2, e_3-e_4 \rangle, \langle e_1, e_3 \rangle
 \]
 with dual values $.5$ and $.25$ respectively. Inputting this to Algorithm \ref{tiling} in the natural way, the tiling set would be
 \[
 \{a_1\cdot(e_1-e_2)+a_2 \cdot (e_3-e_4) + a_3 \cdot e_1 + a_4 \cdot e_3 \, | \, a_i \in \mathbb{Z}, \, 0 \leq a_1, a_2 \leq \floor{M^{.5}}-1, \, 0 \leq a_3, a_4 \leq \floor{M^{.25}} -1 \}
 \]
 
 \vspace*{.5cm}
This example illustrates that asymptotically optimal tilings can substantially differ. Checking the following by hand, another asymptotically optimal tiling for this problem defines the tiling set to be

\[
\{a_1 \cdot e_1 + a_2 \cdot e_3, + a_3 \cdot(e_2-e_3) \, | \, a_i \in \mathbb{Z}, \, 0 \leq a_i \leq \floor{M^{.5}}-1 \}
\]
This shape is fundamentally different than the previous.  Notably, direction $e_4$ is completely ignored.  Algorithm \ref{tiling} for this example outputs $T_2 = \langle e_4 \rangle$ to account for the fact that the tile is within a rank 3 subgroup of $\mathbb{Z}^4$.  Although in this example there is still an optimal full dimensional tile, the next example uses a rank 7 tile in $\mathbb{Z}^8$ and this is the only optimal tiling we found.  Examples exhibiting this behavior are plentiful.
 
\subsection{Need for Flags}
We would like to exhibit a situation in which the dual LP excludes a dual vector which yields a valid tiling. This motivates our use of flags to find tilings.  Consider the computation lattice $\mathbb{Z}^8$ with maps $\phi_1$, $\phi_2$, $\phi_3$ have kernels given respectively by 
\[ \left( \begin{array}{cccccccc}
1 & 0 & 0 & 0 & 0 & 0 & 0 & 0 \\
0 & 1 & 0 & 0 & 0 & 0 & 0 & 0 \\
0 & 0 & 1 & 0 & 0 & 0 & 0 & 0 \end{array} \right), \, 
\left( \begin{array}{cccccccc}
0 & 1 & 0 & 0 & 0 & 0 & 0 & 1 \\
0 & 0 & 1 & 0 & 1 & 1 & 0 & 1 \\
0 & 0 & 0 & 0 & 0 & 1 & 0 & 1 \\
1 & 1 & 0 & 0 & 0 & 0 & 1 & 0 \end{array} \right), \,
\left( \begin{array}{cccccccc}
1 & 0 & 1 & 0 & 0 & 0 & 0 & 0\\
0 & 1 & 1 & 1 & 0 & 0 & 0 & 0 \\
0 & 0 & 0 & 0 & 0 & 1 & 0 & 0 \\
0 & 0 & 0 & 0 & 0 & 0 & 1 & 1 \end{array} \right)
\]

Since there are only 3 maps, this is another case in which primal LP can be formulated efficiently.  The solution that Magma outputs is is $s = (1,1,0)$ so that $s_{\text{HBL}} = 2$.  Moreover, the output dual solution it outputs is supported on subgroups
\[
\langle e_1 + e_7 - e_8, e_2 + e_8, e_3 + e_7 - e_8, e_5 + e_7 - e_8, e_6 + e_8 \rangle , \, \langle e_1, e_2, e_3, e_6, e_7 + e_8 \rangle
\]

with dual values of $.1$ and $.3$ respectively.  Applying one iteration Algorithm \ref{MD} and forming a flag decomposition of Def. \ref{FP}, the new dual solution is supported on subgroups

\[
Y_1 := \langle e_1 + 2 e_6 + e_7 + e_8, e_3 + 2 e_6 + e_7 + e_8, e_2 -e_6 \rangle, \, Y_2 := \langle e_1, e_2 \rangle, \, Y_3 := \langle e_5, e_7 \rangle
\]
with dual values $.4, .3, .1$.

\vspace*{.5cm}
Now for the main point of the example; we calculate $C_1(y)$ as in Eq. \ref{feasibility1}.  It turns out $C_3(y) = 1.2$.  Indeed,

\[
y_{Y_1} \cdot \text{rank}(\phi_3(Y_1))+ y_{Y_2} \cdot \text{rank}(\phi_3(Y_2)) + y_{Y_3} \cdot \text{rank}(\phi_3(Y_3)) = .4 \cdot 2 + .3 \cdot 2 + .1 \cdot 2 = 1.6
\]

Proposition \ref{NPT} is critical here, because $\phi_3(Y_2+Y_1) = \phi_3(Y_1)$, meaning the $.3 \cdot 2$ term is unnecessary.

\end{appendices}

\end{document}